\documentclass[global,final]{svjour}
% \usepackage{llncsdoc}
% \markboth{\LaTeXe{} Class for Lecture Notes in Computer
% Science}{\LaTeXe{} Class for Lecture Notes in Computer Science}

\journalname{Algorithmica}
\usepackage{amscd,amssymb}
\usepackage{mathrsfs} 
\usepackage{graphicx,url}

\newcommand{\abs}[1]{\left\vert#1\right\vert}

\def\I {{\mathbb I}}

\def\N{{\mathbb N}}

\def\R{{\mathbb R}}
\def\s{{\mathbb S}}
\def\e{\varepsilon}
\def\ve{\varepsilon}
\def\VC{{\mathrm{VC}}}

\newtheorem{algorithm}[theorem]{Algorithm}

\begin{document}

\pagestyle{headings}

\title{
Lower Bounds on Performance of Metric Tree Indexing Schemes for Exact Similarity Search in High Dimensions}

\titlerunning{Lower Bounds for Metric Trees}

% \author[ref]{Short author}{Author}
\author{Vladimir Pestov}

\authorrunning{V. Pestov}
% \address[ref]{Address of authors with ref as reference}
\institute{Department of Mathematics and Statistics, University of Ottawa, 
\newline
585 King Edward Avenue, Ottawa, Ontario K1N 6N5 Canada 
\newline
\email{vpest283@uottawa.ca}
%\newline
%\email{http://aix1.uottawa.ca/$\sim$vpest283}
}
% 

%\keywords{Algorithms and data structures, indexing schemes, similarity search, Vapnik-Chernonenkis theory, algorithm performance}
% \subjclass{H.3.1; H.3.3; F.2.2; H.2.4}

\maketitle

\begin{abstract}
  \noindent 
Within a mathematically rigorous model, we analyse the curse of dimensionality for deterministic exact similarity search in the context of popular indexing schemes: metric trees. The datasets $X$ are sampled randomly from a domain $\Omega$, equipped with a distance, $\rho$, and an underlying probability distribution, $\mu$.
While performing an asymptotic analysis, we send the intrinsic dimension $d$ of $\Omega$ to infinity, and assume that the size of a dataset, $n$, grows superpolynomially yet subexponentially in $d$. 
Exact similarity search refers to finding the nearest neighbour in the dataset $X$ to a query point $\omega\in\Omega$, where the query points are subject to the same probability distribution $\mu$ as datapoints. 
Let $\mathscr F$ denote a class of all $1$-Lipschitz functions on $\Omega$ that can be used as decision functions in constructing a hierarchical metric tree indexing scheme. Suppose the VC dimension of the class of all sets $\{\omega\colon f(\omega)\geq a\}$, $a\in\R$
is $o(n^{1/4}/\log^2n)$.
(In view of a 1995 result of Goldberg and Jerrum, even a stronger complexity assumption $d^{O(1)}$ is reasonable.)
We deduce the $\Omega(n^{1/4})$ lower bound on the expected average case performance of hierarchical metric-tree based indexing schemes for exact similarity search in $(\Omega,X)$. In paricular, this bound is superpolynomial in $d$.
\end{abstract}

%\maketitle

%% start the paper here:

\section*{Introduction}\label{S:one}
Every similarity query in a dataset with $n$ points can be answered in time $O(n)$ through a simple linear scan, and in practice such a scan sometimes outperforms the best known indexing schemes for high-dimensional workloads. This is known as the {\em curse of dimensionality,} cf. e.g. Chapter 9 in \cite{santini}, as well as \cite{BGRS,WSB}. 

Paradoxically, there is no known mathematical proof that the above phenomenon is in the nature of high-dimensional datasets. While the concept of intrinsic dimension of data is open to a discussion (see e.g. \cite{clarkson:06,pestov:08}), even in cases commonly accepted as ``high-dimensional'' (e.g. uniformly distributed data in the Hamming cube $\{0,1\}^d$ as $d\to\infty$), the ``curse of dimensionality conjecture'' for proximity search remains unproven \cite{indyk:04}.  Diverse results in this direction  \cite{borodin:99,barkol:00,chavez:01,shaft:06,AIP,PT,PTW,VolPest09} are still preliminary.

Here we will verify the curse of dimensionality for a particular class of indexing schemes widely used in similarity search and going back to \cite{uhlmann:91}: metric trees. So are called hierarchical partitioning indexing schemes equipped with 1-Lipschitz (non-expanding) decision functions $f_C$ at every inner node $C$. The value of $f_C$ at the query point $q$ determines which child node to follow. If $f_C(q)>\ve$, where $\ve>0$ is the range query radius, we can be sure that the solution to the range similarity problem is not in the region $C_-=\{x\colon f_C(x)\leq 0\}$. Similarly, for $f_C(q)<-\ve$. However, if $q$ lies in the decision margin $\{-\ve\leq f_C\leq \ve\}$, no child node can be discarded, and branching occurs.

Choosing a decision function when an indexing scheme is being constructed thus becomes an unsupervised soft margin classification problem. 

\begin{figure}[ht]
\centering
%\scalebox{0.25}[0.25]{
\includegraphics[width=.5\textwidth]
{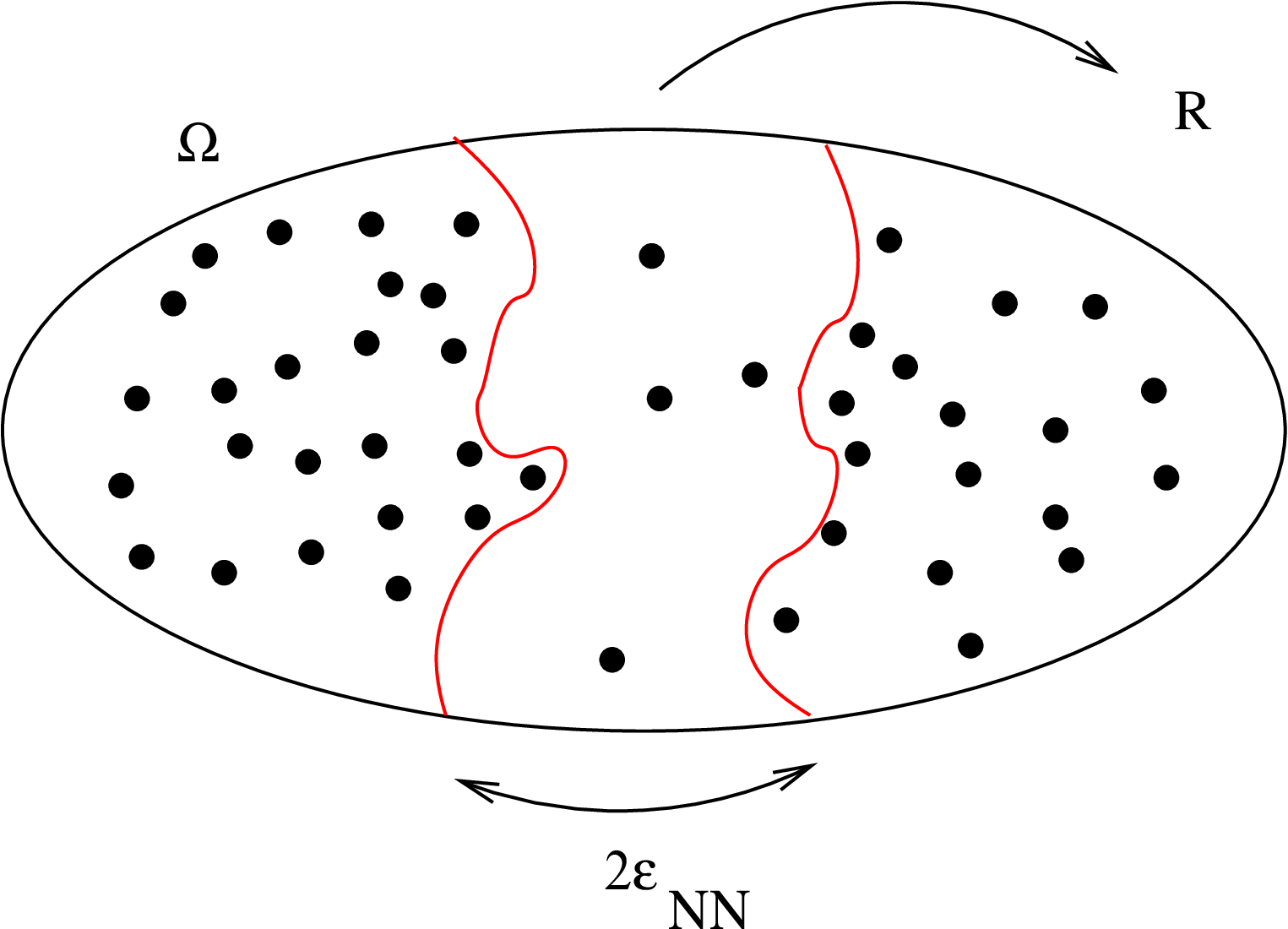}
%}
\caption{Constructing a decision function.}
\label{fig:classification2}
\end{figure}

Assuming the domain is high-dimensional, the well-known concentration of measure phenomenon implies that the measure of the margin approaches one as dimension grows. And under assumption that the combinatorial dimension of the class of all available classifiers (decision functions) grows not too fast (say, polynomially in the dimension of the domain), standard methods of statistical learning imply that randomly sampled data is concentrated on the margin as well, making efficient indexing impossible. 
% For example, this is the case where the classifiers are calculated in polynomial time  taking polynomially many real parameters.

To be more precise, we assume that the domain $(\Omega,\rho)$ is a metric space equipped with a probability distribution $\mu$, and that the datapoints are drawn randomly with regard to $\mu$. The intrinsic dimension of $\Omega$ is defined in terms of concentration of measure as in \cite{pestov:08}. This concept agrees with the usual notion of dimension for such important domains as the Euclidean space $\R^d$ with the gaussian measure $\gamma^d$, the cube $[0,1]^d$ with the uniform measure, the Euclidean sphere $\s^n$ with the Haar (Lebesgue) measure, and the Hamming cube $\{0,1\}^n$ with the Hamming distance and the counting measure. 
Following \cite{indyk:04}, we require the number of datapoints $n$ to grow with regard to dimension $d$ superpolynomially, yet subexponentially: $n=d^{\omega(1)}$ and $d=\omega(\log n)$.

It is clear that the computational complexity of decision functions used in constructing a metric tree is a major factor in a scheme performance. We take this into account in the form of a combinatorial restriction on the subclass $\mathscr F$ of all functions on $\Omega$ that are allowed to be used as decision functions. Namely, we require a well-known parameter of statistical learning theory, the Vapnik-Chervonenkis dimension \cite{vapnik:98}, of all binary functions of the form $\theta(f-a)$, $f\in\mathscr F$, where $\theta$ is the Heaviside function, to be $o(n^{1/4}/\log^2n)$. This is in paricular satisfied if the VC dimension in question is polynomial in $d$.
A very general class of functions satisfying this VC dimension bound is provided by a theorem of Goldberg and Jerrum \cite{GJ}, and apparently decision functions of all indexing schemes used in practice so far in Euclidean (and Hamming cube) domains fall into this class.

Under above assumptions, we prove a lower bound $\Omega(n^{1/4})$ on the expected average performance of a metric tree. This bound is in particular superpolynomial in $d$. 

It is probably hard to argue that the real data can be simulated by random sampling from a high-dimensional distribution. The present author happily concedes that implications of the above result for high-dimensional similarity search are only indirect: our work underscores the importance of further developing a relevant theory of intrinsic dimensionality of data \cite{clarkson:06}, which would equate indexability with low dimension.

A shorter conference version of the paper (with a weaker bound $d^{\omega(1)}$) appears in: Proc. 4th Int. Conf. on Similarity Search and Applications (SISAP 2011), Lipari, Italy, ACM, New York, NY, pp. 25--32. The author is thankful to the anonymous referee for a number of useful remarks, in particular the present lower bound $\Omega(n^{1/4})$ is obtained in response to one of them.

\section{General framework for similarity search}

We follow a formalism of \cite{HKMPS02} as adapted for similarity search in \cite{pestov:00,PeSt06}. A {\em workload} is a triple $W=(\Omega,X,{\mathcal Q})$, where 
$\Omega$ is the {\em domain,} whose elements can occur both as datapoints and as query points, $X\subseteq\Omega$ is a finite subset ({\em dataset}, or {\em instance}), and $\mathcal Q\subseteq 2^{\Omega}$ is a family of {\em queries.} 
{\em Answering a query} $Q\in{\mathcal Q}$  means listing all datapoints $x\in X\cap Q$. 

A ({\em dis}){\em similarity measure} on $\Omega$ is a function of two arguments $\rho\colon\Omega\times\Omega\to\R$, which we assume to be a metric, as in \cite{zezula:06}. (Sometimes one needs to consider more general similarity measures, cf. \cite{farago:93,PeSt06}.) A {\em range similarity
query centred at} $\omega\in\Omega$ is a ball of radius $\ve$ around the query point:
\[Q ={\mathcal B}_\ve(\omega)= \{x\in\Omega\colon \rho(\omega,x)<\ve\}.\]
Equipped with such balls as queries, the triple $W=(\Omega,\rho,X)$ forms a {\em range similarity workload}.

The {\em $k$-nearest neighbours} ($k$-NN) {\em
query} centred at $\omega\in\Omega$, where $k\in\N$, can be reduced to a sequence of range queries. This  is discussed in detail in \cite{chavez:01}, Sect. 5.2.

A workload is {\em inner} if $X=\Omega$ and {\em outer} if $\vert X\vert\ll\vert\Omega\vert$. Most workloads of practical interest are outer ones. Cf. \cite{PeSt06}.

\section{Hierarchical tree index structures}

An {\em access method} is an algorithm that correctly answers every range query. Examples of access methods are given by {\em indexing schemes}.
In particular, a {\em hierarchical tree-based indexing scheme} is a sequence of refining partitions of the domain labelled with a finite rooted tree. (For simplicity, we will assume all trees to be binary: this is not really restrictive.) Cf. Figure \ref{fig:tree}.
Such a scheme takes storage space $O(n)$. 

\begin{figure}[ht]
\centering
\scalebox{0.3}[0.3]{
\includegraphics
%[width=.5\textwidth]
{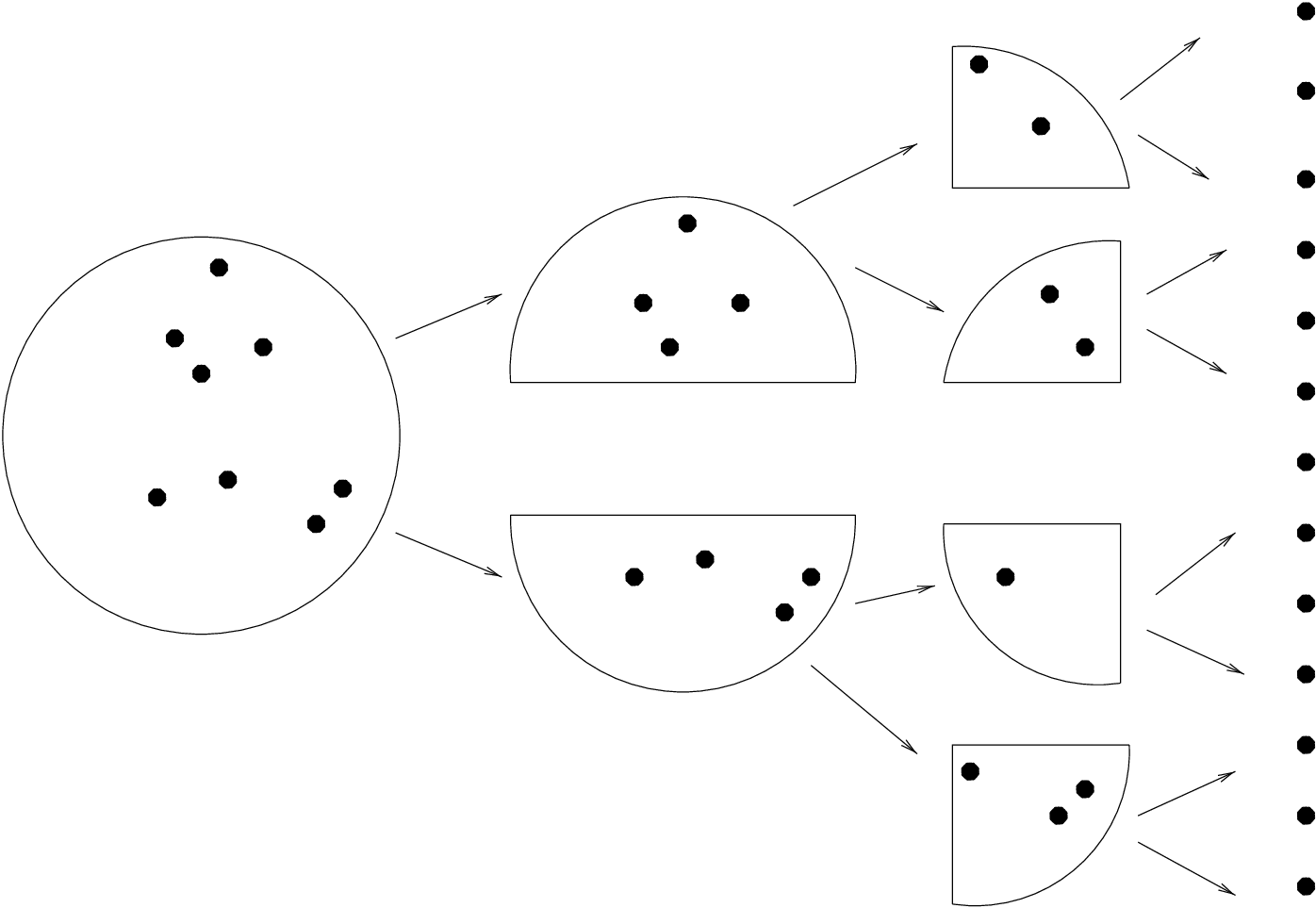}}
\caption{A refining sequence of partitions of $\Omega$.}
\label{fig:tree}
\end{figure}

To process a range query ${\mathcal B}_{\ve}(\omega)$, we traverse the tree recursively to the leaf level. Once a leaf $B$ is reached, its contents (datapoints $x\in X\cap B$) are accessed, and the condition $x\in {\mathcal B}_{\ve}(\omega)$ verified for each one of them. 

% Let $C\subseteq \Omega$ be an internal node having child nodes $A$ and $B$, 
Of main interest is what happens at each internal node $C$. Let us identify $C$ with the corresponding element $C\subseteq\Omega$ of the partition, and 
suppose that $A$ and $B$ are child nodes of $C$, 
so that $C=A\cup B$. A branch descending from $B$ can be pruned provided ${\mathcal B}_\ve(\omega)\cap B =\emptyset$, because then datapoints contained in $B$ are of no further interest. This is the case where one can certify that $\omega$ is not contained in the $\ve$-neighbourhood of $B$:
\[\omega\notin B_\ve=\{x\in\Omega\colon \rho(x,B)<\ve\}.\] (Cf. Fig. \ref{fig:split}, l.h.s.) 
Similarly, if $\omega\notin A_\ve$, then the sub-tree descending from $A$ can be pruned. However, if 
the open ball ${\mathcal B}_{\ve}(\omega)$ meets both $A$ and $B$ or, 
equivalently, 
$\omega$ belongs to the intersection of $\ve$-neighbourhoods of $A$ and $B$, pruning is impossible and 
the search branches out. (Cf. Fig. \ref{fig:split}, r.h.s.) 

\begin{figure}[ht]
\centering
\scalebox{0.14}[0.14]{
\includegraphics{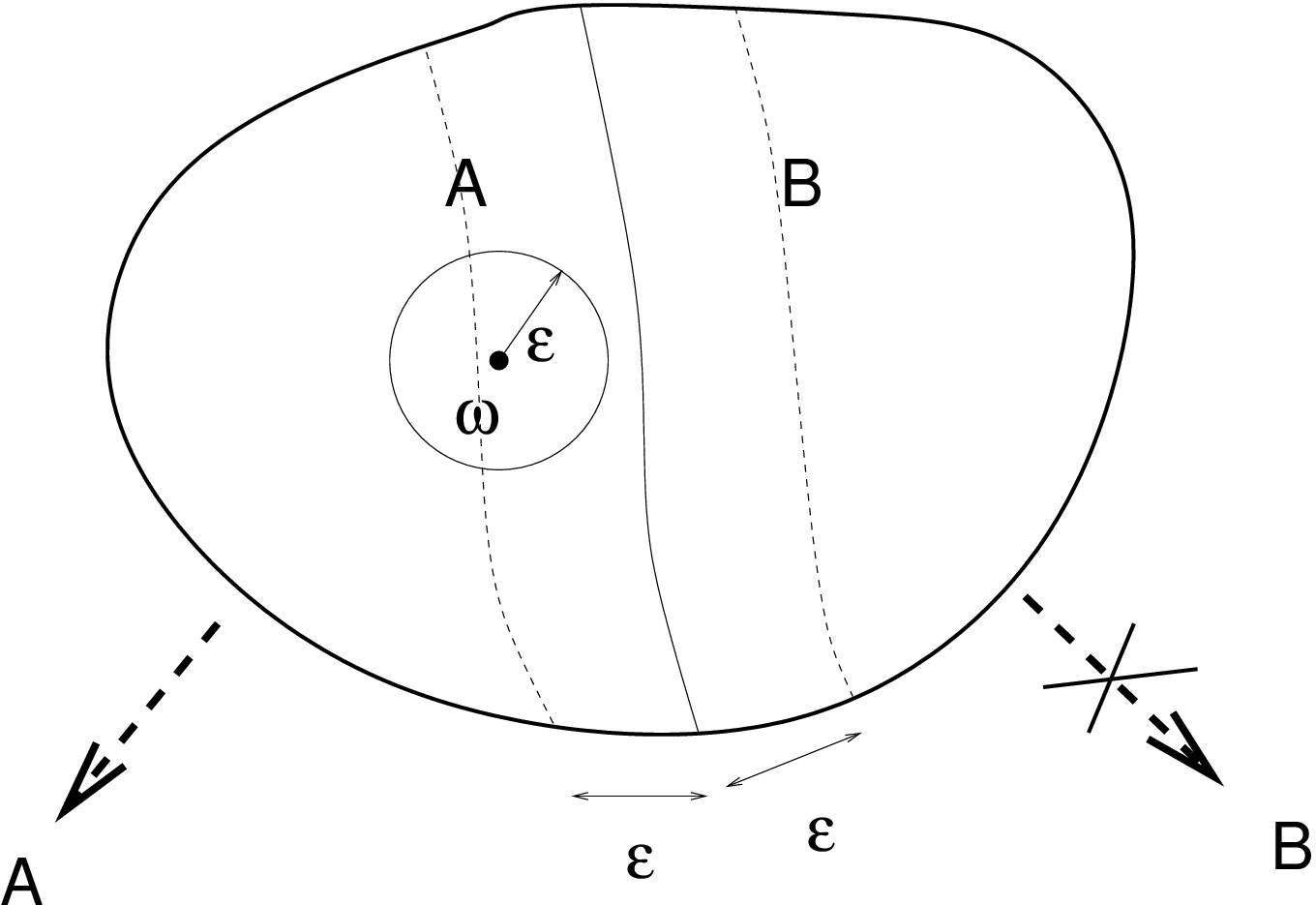}}
\hskip 1.5cm
\scalebox{0.14}[0.14]{
\includegraphics{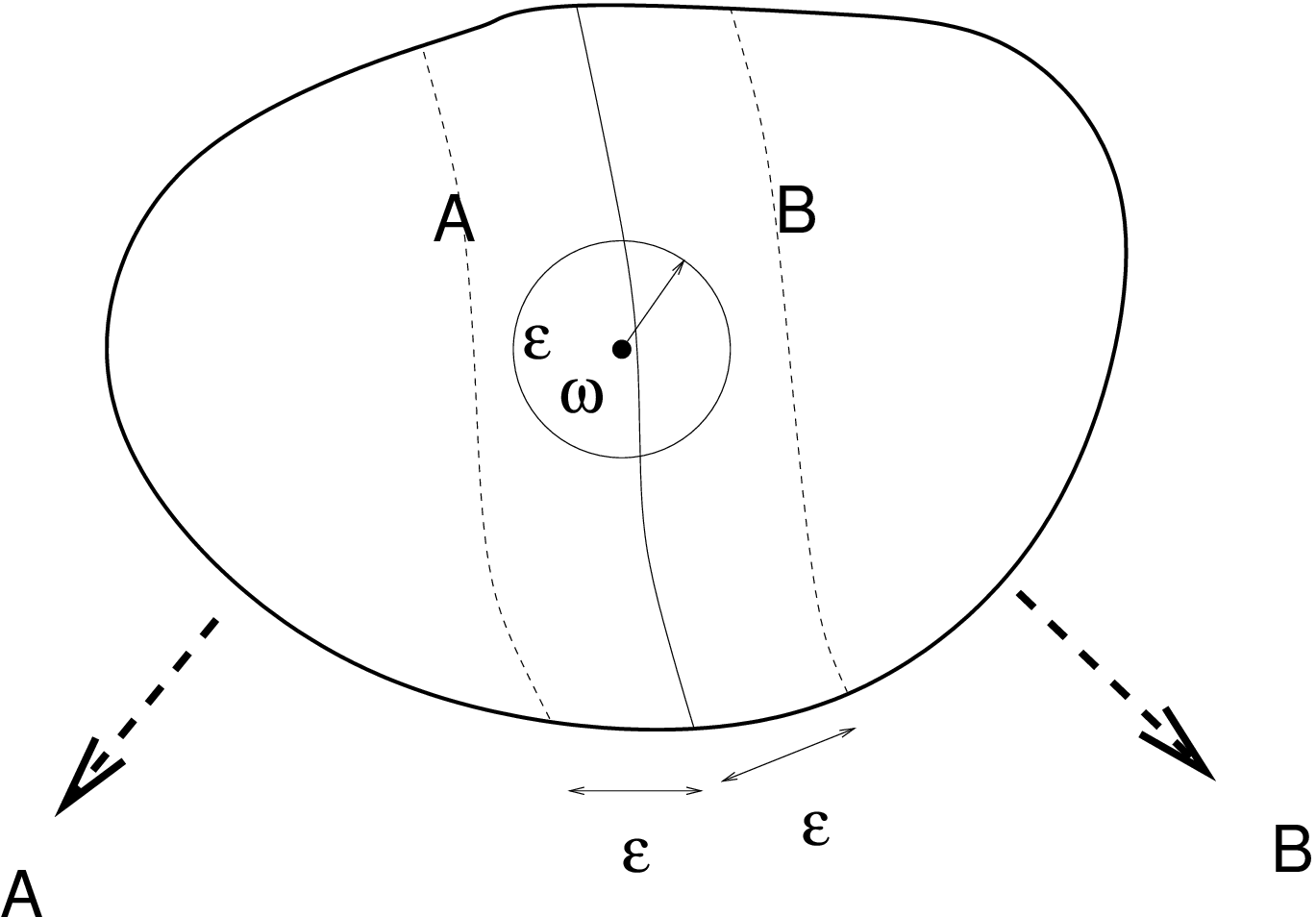}}
\caption{Pruning is possible (l.h.s.), and impossible (r.h.s.).}
\label{fig:split}
\end{figure}

In order to efficiently certify that ${\mathcal B}_\e(\omega)\cap B =\emptyset$, one employs the technique of 
{\em decision functions}. A function $f\colon \Omega\to \R$ is called {\em 1-Lipschitz} if
\[\forall x,y\in\Omega,~~\vert f(x) - f(y)\vert \leq \rho(x,y).\]

Assign to every internal mode $C$ a 1-Lipschitz function $f=f_C$ 
so that $f_C\upharpoonright B \leq 0$ and $f_C\upharpoonright A \geq 0$. 
It is easily seen that $f_C\upharpoonright B_{\ve}<\ve$, and so the fact that 
\fbox{$f_C(\omega)\geq\ve$} serves as a certificate for 
\fbox{${\mathcal B}_\ve(\omega)\cap B =\emptyset$}, assuring that a sub-tree descending from $B$ can be pruned. Similarly, if $f_C(\omega)\leq-\ve$, the sub-tree descending from $A$ can be pruned. 

\begin{figure}[ht]
\centering
\scalebox{0.75}[0.75]{
\includegraphics[width=.5\textwidth]{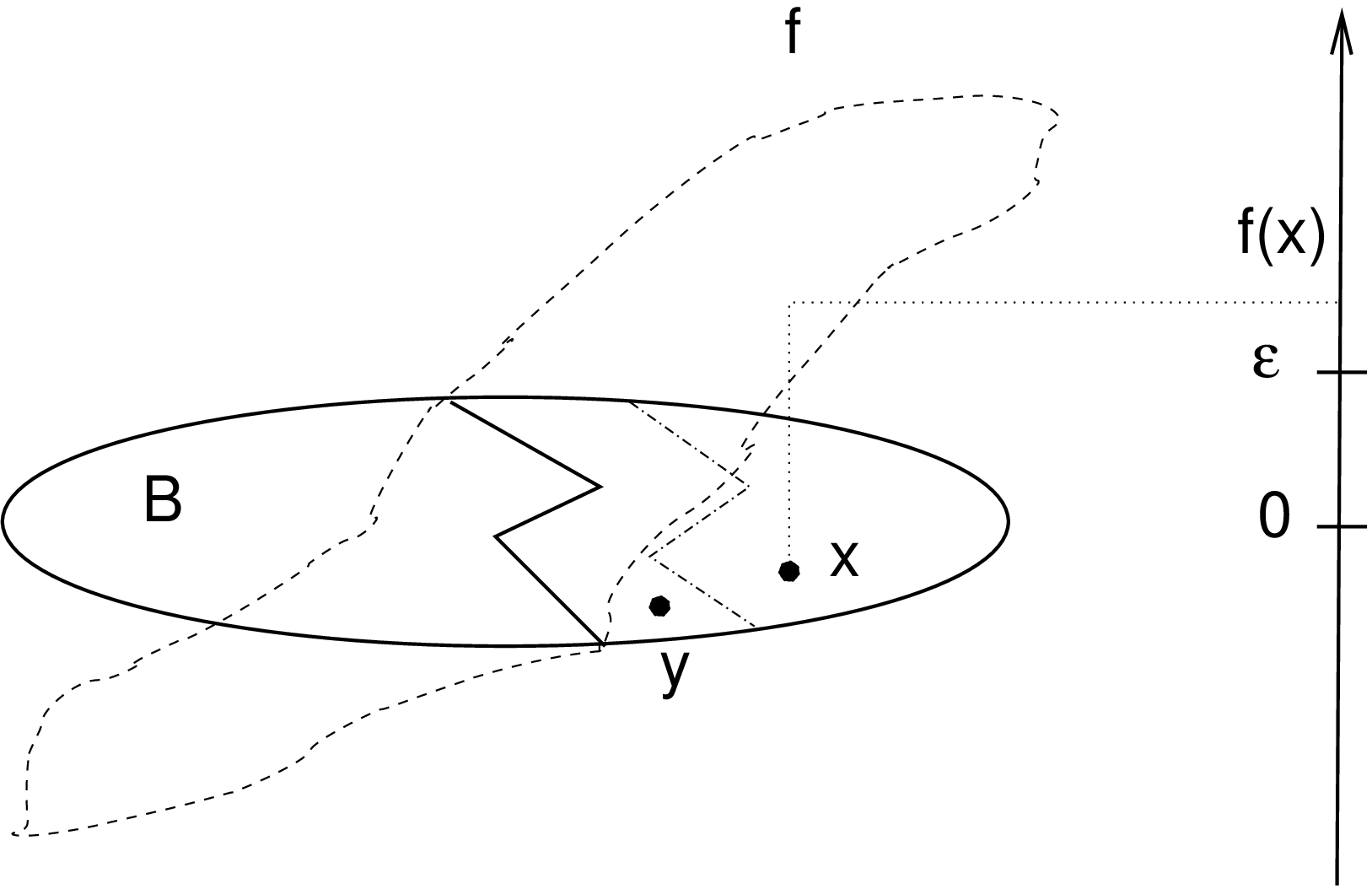}}
\caption{Graph of a decision function $f=f_C$.}
\label{fig:certif}
\end{figure}

Of course, decision functions should have sufficiently low computational complexity in order for the indexing scheme to be efficient. 

A hierarchical indexing structure employing 1-Lipschitz decision functions at every node is known as a {\em metric tree.}

\section{Metric trees}
Here is a formal definition.
A metric tree for a metric similarity workload $(\Omega,\rho,X)$ 
consists of 
\begin{itemize}
\item
a finite binary rooted tree $\mathcal T$,
\item
a collection of (possibly partially defined) real-valued $1$-Lipschitz functions $f_t\colon B_t\to\R$ for every inner node $t$ (decision functions), where $B_t\subseteq\Omega$,
\item a collection of {\em bins} $B_t\subseteq\Omega$ for every leaf node $t$, containing pointers to elements $X\cap B_t$,
\end{itemize}
so that
\begin{itemize}
\item $B_{{\mathrm{root}}({\mathcal T})}=\Omega$,
\item for every internal node $t$ and child nodes $t_-,t_+$, one has
$B_t\subseteq B_{t_-}\cup B_{t_+}$,
\item $f_t\upharpoonright B_{t_-}\leq 0$, $f_t\upharpoonright B_{t_+}\geq 0$.
\end{itemize}

When processing a range query ${\mathcal B}_\e(\omega)$, 
\begin{itemize}
\item $t_{-}$ is accessed $\iff$ $f_t(\omega)<\ve$, and
\item $t_{+}$ is accessed $\iff$ $f_t(\omega)>-\ve$.
\end{itemize}

Here is the search algorithm in pseudocode.

\begin{algorithm} $\,$
\label{alg:search}
\newcommand{\keyw}[1]{{\bf #1}}
\begin{tabbing}
\quad \=\quad \=\quad \=\quad\=\quad\=\quad\=\quad\=\quad\=\quad\kill
\keyw{on input} $(\omega,\ve)$ \keyw{do} \\
\> \> set $A_0=\{{\mathrm{root}}({\mathcal T}) \}$ \\
\>\>\keyw{for} each $i=0,1,\ldots,{\mathrm{depth}}({\mathcal T})-1$ \keyw{do} \\
\>\>\>\>\keyw{if} $A_i\neq\emptyset$ \\
\>\>\>\>\keyw{then} for each $t\in A_i$ \keyw{do} \\
\>\>\>\>\>\>\keyw{if} $t$ is an internal node \\
\>\> \>\>\>\>\keyw{then} \keyw{do}\\
\>\>\>\>\>\>\>\> \keyw{if} $f_t(\omega)<\ve$\\
\>\>\>\>\>\>\>\>\keyw{then} $A_{i+1}\leftarrow A_{i+1}\cup \{ t_{-}\}$ \\
\>\>\>\>\>\>\>\> \keyw{if} $f_t(\omega)>-\ve$\\
\>\>\>\>\>\>\>\>\keyw{then} $A_{i+1}\leftarrow A_{i+1}\cup \{ t_{+}\}$ \\
\>\>\>\>\>\>\keyw{else} \keyw{for} each $x\in B_t$ \keyw{do}\\
\>\>\>\>\>\>\>\>\keyw{if} $x\in {\mathcal B}_{\ve}(\omega)$ \\
\>\>\>\>\>\>\>\>\keyw{then} $A\leftarrow A\cup\{x\}$\\
\keyw{return} $A$
\end{tabbing} \qed
\end{algorithm}

Under our assumptions on the metric tree, it can be proved (cf. \cite{PeSt06}, Theorem 3.3) that
Algorithm \ref{alg:search} correctly answers every range similarity query for the workload $(\Omega,\rho,X)$, and so together with an indexing scheme forms an access method.

\section{Examples of metric tree indexing schemes}

\begin{example}[$vp$-tree]
\label{ex:vptree}
The {\em vp-tree} \cite{Yan} uses decision functions of the form
\[f_t(\omega)=(1/2)(\rho(x_{t_+},\omega)-\rho(x_{t_-},\omega)),\]
where $t_{\pm}$ are two children of $t$ and $x_{t_{\pm}}$ are
the {\em vantage points} for the node $t$.
\end{example}

\begin{example}[$M$-tree]
\label{ex:mtree}
%\subsection{$M$-tree}
The {\em M-tree} \cite{CPZ97} employs decision
functions 
\[f_t(\omega)=\rho(x_t,\omega)-\sup_{\tau\in B_t}
\rho(x_t,\tau),\]
where 
$B_t$ is a block corresponding to the node $t$, $x_t$ is a datapoint
chosen for each node $t$, and suprema on the r.h.s. are precomputed and stored.
\end{example}

For differing perspectives on metric trees, see \cite{PeSt06,chavez:01}. Each of the books \cite{samet,santini,zezula:06} is an excellent reference to indexing structures in metric spaces.

\section{Curse of dimensionality}
% Although efficient indexing schemes for exact search can be built in some situations (cf. e.g. \cite{StPe07} where the geometry of large databases of protein fragments is put to use), often this is not the case, and so in 
In recent years the research emphasis 
has shifted away from {\em exact} towards {\em approximate} similarity search:
 
\begin{itemize}
\item
given $\e>0$ and $\omega\in\Omega$, return a point $x\in X$ that is [with confidence $>1-\delta$] at a distance $<(1+\e)d_{NN}(\omega)$ from $\omega$.
\end{itemize}

This has led to 
%There 
many impressive achievements, particularly \cite{KOR,IM:98}, see also the survey \cite{indyk:04} and Chapter 7 in \cite{vempala}.
At the same time, research in exact similarity search, especially concerning deterministic algorithms, has slowed down. At a theoretical level, the following unproved conjecture helps to keep research efforts in focus.

\begin{conjecture}
[The curse of dimensionality conjecture, 
cf. \cite{indyk:04}]
Let $X\subseteq\{0,1\}^d$ be a data\-set with $n$ points, where the Hamming
cube $\{0,1\}^d$ is equipped with the Hamming ($\ell^1$) distance:
\[d(x,y) = \sharp \{i\colon x_i\neq y_i\}.\]
Suppose $d=n^{o(1)}$, but $d=\omega(\log n)$. (That is, the number of points in $X$ has intermediate growth with regard to the dimension $d$: it is  superpolynomial in $d$, yet subexponential.)
Then any data structure for exact nearest neighbour search in $X$,
with $d^{O(1)}$ query time, must use $n^{\omega(1)}$ space within the {\em cell probe model} of computation.
\end{conjecture}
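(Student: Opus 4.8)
The plan is to prove the contrapositive within the cell-probe model: I will show that if a data structure uses $s=\poly(n)$ cells (each a word of $w=d^{O(1)}$ bits, as is standard) and answers every exact nearest-neighbour query with $t=d^{O(1)}$ probes, a contradiction ensues, so that $t=d^{O(1)}$ forces $s\geq 2^{cd}$ for some constant $c>0$; since $d=\omega(\log n)$ gives $\log n=o(d)$, this bound reads $s\geq 2^{\omega(\log n)}=n^{\omega(1)}$, which is the claim. The first step is the classical reduction of cell-probe lower bounds to asymmetric communication complexity (Miltersen--Nisan--Safra--Wigderson; Patrascu--Thorup). View the query process as a two-party game in which Alice holds the query point $\omega\in\{0,1\}^d$ and Bob holds the data structure built from $X$. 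A $t$-probe scheme yields a protocol where Alice sends a cell address ($\lceil\log_2 s\rceil$ bits) and Bob replies with the cell content ($w$ bits), repeated $t$ times; hence Alice communicates $a=t\lceil\log_2 s\rceil$ bits and Bob communicates $b=tw$ bits. Under the hypotheses $s=\poly(n)$, $t=d^{O(1)}$, and $\log s=O(\log n)=o(d)$, this is a protocol with $a=d^{O(1)}$ and $b=d^{O(1)}$ bits, and a sufficiently strong communication lower bound for the decision version of exact nearest neighbour will contradict it.

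Second, I would fix a hard input distribution using concentration of measure in the Hamming cube and reduce from a canonical communication-hard primitive. Under the counting (uniform) measure, two independent points of $\{0,1\}^d$ lie at Hamming distance $d/2\pm O(\sqrt d)$ with overwhelming probability, so for a random dataset of $n=d^{\omega(1)}$ points all $O(n^2)$ pairwise distances concentrate in a window of width $O(\sqrt{d\log n})$ about $d/2$. In this regime the nearest neighbour is separated from its competitors by only a handful of coordinates, so certifying it requires resolving these fine differences --- precisely the situation in which instances of lopsided set disjointness (LSD) and partial match embed into exact nearest neighbour (following Patrascu) with full \emph{richness}: no large combinatorial rectangle of (query, dataset) pairs can be monochromatic.

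Third, I would apply the richness method, or an equivalent round-elimination argument, to convert this absence of large monochromatic rectangles into a trade-off of the form $a\cdot 2^{b}\geq 2^{cd}$ for some constant $c>0$. For the few-probe regime, where Bob's total communication satisfies $b=o(d)$, this immediately forces $a\geq 2^{cd-b}=2^{(c-o(1))d}$ and hence the desired space bound. For polynomial query time I would instead route through the cell-sampling technique (Panigrahy--Talwar--Wieder; Larsen): sample a $1/t$ fraction of the cells, observe that a $t$-probe query lands entirely in the sample with constant probability, and argue that because each query answer carries roughly $\log n$ bits of entropy about the identity of the neighbour, the sampled structure --- of size about $s/t$ --- can resolve only few queries, forcing $s$ to be large.

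The crux --- and the reason this statement is a conjecture rather than a theorem --- is the quantitative loss in both of these arguments once Bob is granted a polynomial budget. When $t=d^{O(1)}$ the Bob-communication $b=tw$ is itself a polynomial in $d$ that may exceed $cd$, so the richness inequality $a\cdot 2^{b}\geq 2^{cd}$ becomes vacuous, and the known cell-sampling bounds for \emph{exact} nearest neighbour in intermediate dimension degrade to merely polynomial, not superpolynomial, space. Closing this gap appears to demand a genuinely new information-theoretic ingredient: a direct-sum (information-complexity) theorem that amplifies across $d^{\omega(1)}$ correlated queries simultaneously, or a sharpening of the LSD information complexity that survives the lopsided budget $\log s=o(d)$. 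I would therefore concentrate on an information-complexity formulation in which measure concentration guarantees that each individual probe reveals only $o(1)$ bits about the neighbour's identity, so that $d^{\omega(1)}$ probes are information-theoretically necessary; making this per-probe bound rigorous against an adaptive, adversarial data structure is the central obstacle, and where I would expect the real difficulty to lie.
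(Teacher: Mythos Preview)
The paper does not prove this statement. It is presented as an open conjecture (``The curse of dimensionality conjecture''), with the surrounding text noting that it ``remains unproven'' and that the best known bound is $O(d/\log\frac{sd}{n})$. There is therefore no ``paper's own proof'' to compare against. What the paper actually proves is the much narrower Theorem~\ref{th:main}: a superpolynomial lower bound on the expected average performance of \emph{metric tree} indexing schemes whose decision functions are $1$-Lipschitz and drawn from a class of polynomially bounded VC dimension. That argument uses concentration of measure plus VC theory (via the Bin Access Lemma and Lemma~\ref{l:bins}), and the paper's Discussion section explicitly disclaims any extension to the general cell-probe model.

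Your write-up is not a proof proposal so much as a survey of the standard cell-probe toolkit (communication reduction, richness, LSD, cell sampling) together with an honest admission that none of these closes the gap once the prober has a $d^{O(1)}$ budget. You correctly identify the obstruction: with $t=d^{O(1)}$ probes and $w=d^{O(1)}$-bit cells, Bob's communication $b=tw$ can already be $d^{O(1)}\gg d$, so the richness inequality $a\cdot 2^{b}\geq 2^{cd}$ is vacuous, and cell sampling likewise degrades. That is exactly why the statement is a conjecture. So there is no error in your reasoning, but also no proof: the ``information-complexity direct-sum'' step you gesture at in the final paragraph is the entire open problem, not a detail to be filled in. If you intend to submit this as a proof of the conjecture, that is a genuine gap; if you intend it as a discussion of approaches, it is accurate but should be labelled as such.
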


The best lower bound currently known is $O(d/\log\frac{sd}{n})$, where $s$ is the number of cells used by the data structure \cite{PT}. In particular, this implies the earlier bound $\Omega(d/\log n)$ for polynomial space data structures \cite{barkol:00}, as well as the bound $\Omega(d/\log d)$ for near linear space (namely $n\log^{O(1)}n$). See also \cite{AIP,PTW,PTW2}. A general reference for the cell probe model of computation is  \cite{miltersen}, while in the context of similarity search the model is discussed in \cite{pestov:2011b}.

% Notice that the conjecture requires the assumption $P\neq NP$. Given a query point $q$, all distances $d(q,x)$, $x\in X$ to the points of a dataset can be computed and compared between themselves in nondetermenistic polynomial time, $O(\log n)$. This is achieved by indexing elements of $X$ with nodes of a binary tree $T_1$ of depth $\lceil \log n \rceil$ and gluing $T_1$ and an anti-isomorphic copy $T_2$ of $T_1$. Leaves of $T_1$, once the branches are traversed, will store the smallest value $d(q,x)$ among all elements $x$ of the corresponding branch, along with the pointer to the corresponding $x$. Traversing the tree $T_2$ is done from the leafs towards the root, and corresponds to pairwise comparisons of distances.

\section{Concentration of measure}

As in \cite{CPZ}, we assume the existence of an unknown probability measure $\mu$ on $\Omega$, such that both datapoints $X$ and query points $ \omega $ are being sampled with regard to $\mu$. 

On the one hand, this assumption is open to debate: for instance, it is said that in a typical university library most books (75 {\%} or more) are never borrowed a single time, so it is reasonable to assume that the distribution of queries in a large dataset will be skewed equally heavily away from data distribution. On the other hand, there is no obvious alternative way of making an apriori assumption about the query distribution, and in some situations the assumption makes sense indeed, e.g. 
in the context of a large biological database where a newly-discovered protein fragment has to be matched against every previously known sequence. 

The triple $(\Omega,\rho,\mu)$ is known as a {\em metric space with measure}. This concept opens the way to systematically using the {\em phenomenon of concentration of measure on high-dimensional structures}, % which can be
also known as the {\em ``Geometric Law of Large Numbers''} \cite{MS,Le}.
This phenomenon can be informally summarized as follows:

\begin{quote}
{for a typical ``high-dimensional'' structure $\Omega$, if $A$ is a subset
containing at least half of all points, then the 
measure of the $\ve$-neighbourhood
$A_\ve$ of $A$ is overwhelmingly close to $1$ 
already for small $\ve>0$. }
\end{quote}

% To express the phenomenon in a rigorous way,
Here is a rigorous way for dealing with the phenomenon. 
% The 
Define the {\em concentration function} $\alpha_{\Omega}$ of a metric space with measure $\Omega$ by

\[\alpha_{\Omega}(\ve)=
\left\{
\begin{array}{ll} \frac 12, & \mbox{if $\ve=0$,} \\
1-\inf\left\{\mu\left(A_\e\right) \colon
A\subseteq\Omega, ~~ \mu(A)\geq\frac 12\right\}, &
\mbox{if $\ve>0$.}
\end{array}\right.
\]

The value of $\alpha_{\Omega}(\ve)$ gives un upper bound on the measure of the complement to the $\ve$-neighbourhood $A_{\ve}$ of every subset $A$ of measure $\geq 1/2$.

For high-dimensional spaces the values of the concentrataion function often admit gaussian upper bounds of the form
\begin{equation}
\label{eq:gaussianbds}
\alpha_{\Omega}(\ve)= \exp(-\Theta(d)\ve^2),
\end{equation}
where $d$ is a dimension parameter. For instance, the concentration function of the $d$-dimensional Hamming cube $\{0,1\}^d$ with the normalized Hamming metric and uniform measure satisfies a Chernoff bound $\alpha(\e)\leq \exp(-2\e^2d)$, cf Fig. \ref{fig:hamming50}.

\begin{figure}[ht]
% \begin{minipage}[t]{.48\linewidth}
\centering
%\scalebox{0.2}[0.2]{
% \includegraphics{illconc1.eps}}
% \caption{To the concept of concentration function $\alpha_{\Omega}(\e)$.}
% \label{fig:illconc1}
% \end{minipage}
% \hfill
% \begin{minipage}[t]{.43\linewidth}
% \centering
\scalebox{0.5}[0.5]{
\includegraphics
% %[width=.5\textwidth]
{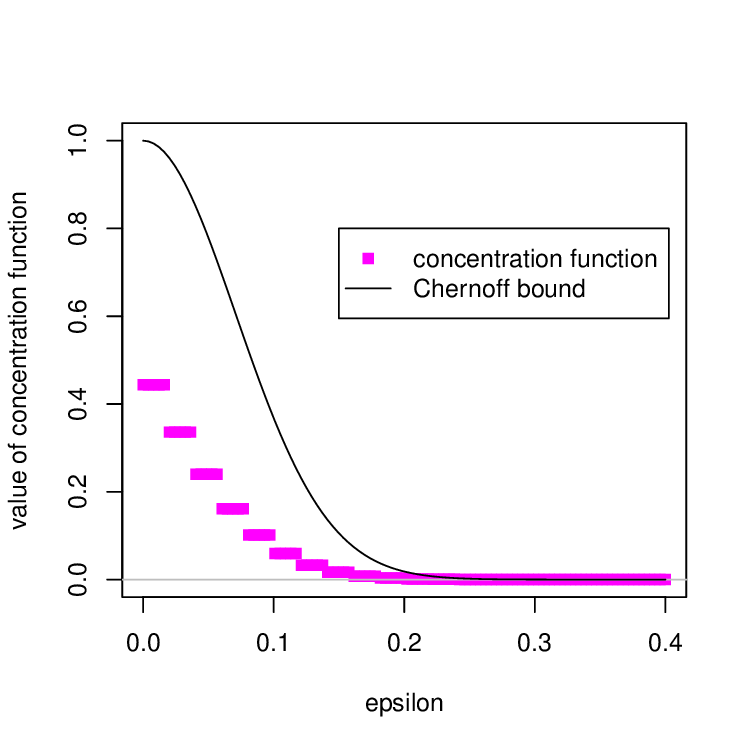}}
\caption{Concentration function of $\{0,1\}^{50}$ {\em vs} Chernoff bound.}
\label{fig:hamming50}
% \end{minipage}
\end{figure}
% 
% For example, let $\Omega=\{0,1\}^d$ be the Hamming cube of dimension $d$ equipped with the normalized Hamming distance 
% \[d(x,y)=\frac 1d\sharp\left\{i\colon x_i\neq y_i\right\}\]
% and the uniform (normalized counting) measure
% \[\mu_{\sharp}(A)=\frac {\sharp A}{2^d}.\]
% Then the concentration function of $\Omega$ satisfies a gaussian upper estimate (Chernoff--Okamoto bound), cf. Fig. \ref{fig:hamming50}: 
% \[\alpha_{\{0,1\}^n}(\ve)\leq e^{-3\e^2d/4}.\]
% This estimate is not very good, especially for smaller values $\ve\approx 0$, however it is known to be asymptotically sharp across the spectrum of all $\ve$ as $d\to\infty$. .

% 
% \begin{figure}[ht]
% \centering
% \scalebox{0.4}[0.4]{
% \includegraphics
% %[width=.5\textwidth]
% {hamming50.eps}}
% \caption{Concentration function $\alpha(\{0,1\}^{101},\ve)$ {\em vs} Chernoff bound.}
% \label{fig:hamming50}
% \end{figure}

Similar bounds hold for Euclidean spheres $\s^n$, cubes $\I^n$, and many other structures of both continuous and discrete mathematics, equipped with suitably normalized distances and canonical probability measures.
The concentration phenomenon can be now expressed by saying that for ``typical'' high-dimensional metric spaces with measure, $\Omega$, the concentration function $\alpha_{\Omega}(\ve)$ drops off sharply as $d\to\infty$ \cite{MS,Le}.

If now $f\colon\Omega\to\R$ is a $1$-Lipschitz function, denote $M=M_f$ the median value of $f$, that is, a (non-uniquely defined) real number with the property that each of the events $[f\geq M]$ and $[f\leq M]$ occurs with probabiity at least half. One can prove without much difficulty:
\begin{equation}
\mu\{x\in\Omega\colon \abs{f(x)-M_f}>\ve\}<2\alpha_{\Omega}(\ve).\end{equation}
Thus, every one-Lipschitz function on a high-dimensional metric space with measure concentrates near one value.

\section{Workload assumptions}
\label{s:model}

Here are our standing assumptions for the rest of the article.
Let $(\Omega,\rho,\mu)$ be a domain equipped with a metric $\rho$ and a probability measure $\mu$. We assume that the expected distance between two points of $\Omega$ is normalized so as to become asymptotically constant:
\begin{equation}
{\mathbb{E}}\,\rho(x,y) = \Theta(1).
\label{eq:exp}
\end{equation}
We further assume that $\Omega$ has ``concentration dimension $d$'' %\cite{pestov:08} 
in the sense that the concentration function $\alpha_{\Omega}$ is gaussian with exponent $\Theta(d)$;
\begin{equation}
\alpha_{\Omega}(\ve)=\exp\left(-\Theta(\ve^2 d) \right).
\label{eq:dim}
\end{equation}
(This approach to intrinsic dimension is developed in \cite{pestov:08}.)

A dataset $X\subseteq\Omega$ contains $n$ points, where $n$ and $d$ are related as follows:
\begin{eqnarray}
n&=&d^{\omega(1)},\label{eq:superpol} \\
d&=&\omega(\log n). \label{eq:subexp}
\end{eqnarray}
In other words, asymptotically $n$ grows faster than any polynomial function $Cd^k$, $C>0$, $k\in \N$, but slower than any exponential function $e^{cd}$, $c>0$. (An example of such rate of growth is $n=2^{\sqrt d}$.) For the purposes of asymptotic analysis of search algorithms such assumptions are natural \cite{indyk:04}.

Datapoints are modelled by a sequence of i.i.d. random variables distributed according to the measure $\mu$:
\[X_1,X_2,\ldots,X_n\sim\mu.\]
The instances of datapoints will be denoted with corresponding lower case letters $x_1,x_2,\ldots,x_n$.

Finally, the query centres $\omega\in\Omega$ follow the same distribution $\mu$:
\[\omega\sim\mu.\]
% (Admittedly, this is a weak point of the model, but it is a usual assumption, and it difficult to think of what other reasonable assumption can be made.) 

% Denote $\e_M$ the median value of the distance from $\omega$ to the nearest neighbour in $X$. Then for at least a half of query points the search radius will be at least as big as $\e_M$. 
% 
 
%\newpage

\section{Query radius}
\label{s:branching}

It is known that in high-dimensional domains the distance to the nearest neighbour is approaching the average distance between two points (cf. e.g. \cite{BGRS} for a particular case). This is a consequence of concentration of measure, and the result can be stated and proved in a rather general situation.
Denote $\ve_{NN}(\omega)$ the distance from $\omega\in\Omega$ to the nearest point in $X$. The function $\ve_{NN}$ is easily verified to be
$1$-Lipschitz, and so concentrates near its median value. From here, one deduces:

\begin{lemma}
\label{l:radius}
Under our assumptions on the domain $\Omega$ and a random sample $X$, with confidence approaching $1$ one has for all $\e$
\[\mu\left\{\omega\colon \abs{\ve_{NN}(\omega)-{\mathbb{E}}\,\rho(x,y)}>\e\right\} < \exp(-\Theta(\e^2 d)).\]
\qed
\end{lemma}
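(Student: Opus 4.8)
The plan is to establish the concentration of the nearest-neighbour distance $\ve_{NN}$ in two stages: first show that $\ve_{NN}$ concentrates sharply around \emph{some} value (its median), using the $1$-Lipschitz property and the concentration function $\alpha_{\Omega}$; then identify that median with the expected interpoint distance ${\mathbb E}\,\rho(x,y)$ up to an error of order $\exp(-O(\e^2 d))$. Throughout, ``with confidence approaching $1$'' refers to the randomness in the sample $X=\{X_1,\dots,X_n\}$, while the measure $\mu\{\cdots\}$ is taken over the query point $\omega$.

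First I would fix a realization of the dataset $x_1,\dots,x_n$ and observe that $\omega\mapsto\ve_{NN}(\omega)=\min_{1\le i\le n}\rho(\omega,x_i)$ is $1$-Lipschitz, being a minimum of the $1$-Lipschitz functions $\rho(\cdot,x_i)$. Let $M=M(X)$ be a median of $\ve_{NN}$ with respect to $\mu$. A standard consequence of the definition of the concentration function is that any $1$-Lipschitz function deviates from its median by more than $\e$ on a set of $\mu$-measure at most $2\alpha_{\Omega}(\e)$; combined with the gaussian bound (\ref{eq:dim}), this gives
\[\mu\left\{\omega\colon \abs{\ve_{NN}(\omega)-M}>\e\right\}\le 2\alpha_{\Omega}(\e)=\exp(-\Theta(\e^2 d)).\]
This step is essentially automatic and holds for \emph{every} instance $X$, with no appeal to randomness yet.

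The substantive step is to pin down the median $M$. The upper bound $M\le{\mathbb E}\,\rho(x,y)+\e$ (with high confidence over $X$) should follow because, for a \emph{typical} query $\omega$, the function $x\mapsto\rho(\omega,x)$ concentrates near ${\mathbb E}\,\rho(\omega,x)\approx{\mathbb E}\,\rho(x,y)$, so the ball of radius ${\mathbb E}\,\rho(x,y)+\e$ about $\omega$ has $\mu$-measure bounded below by $1-\exp(-\Theta(\e^2d))$; since the $n$ points $X_i$ are i.i.d.\ $\mu$, at least one of them lands in this ball except on an event of probability $(1-\text{const})^n$, which is negligible given (\ref{eq:superpol}). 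For the lower bound $M\ge{\mathbb E}\,\rho(x,y)-\e$, I would argue that the ball ${\mathcal B}_{{\mathbb E}\rho(x,y)-\e}(\omega)$ has exponentially small $\mu$-measure for typical $\omega$ — here is where I expect to lean on concentration most carefully, possibly also on the normalization (\ref{eq:exp}) together with a concentration-based lower estimate on the measure of complements of small balls — so that with $n$ subexponential in $d$ by (\ref{eq:subexp}), the expected number of datapoints inside is $o(1)$, and a union bound over $i$ keeps the bad event rare. Taking a union over a suitable $\e$-net of values (or just noting the statement is quantified over $\e$ with the $O(\cdot)$ absorbing net cardinality, which is polynomial in $1/\e$) closes the argument.

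The main obstacle, as anticipated, is the lower estimate $M\ge{\mathbb E}\,\rho(x,y)-\e$: the concentration function $\alpha_{\Omega}$ directly controls neighbourhoods of \emph{large} sets, but not the measures of \emph{small} balls, and one needs to rule out the possibility that $\mu$ places non-negligible mass very close to a typical $\omega$. I would handle this by applying concentration to the $1$-Lipschitz function $\rho(\omega,\cdot)$ to show it exceeds ${\mathbb E}\,\rho(x,y)-\e$ on all but an $\exp(-\Theta(\e^2d))$-fraction of $\Omega$, using that its $\mu$-mean is within $o(1)$ of ${\mathbb E}\,\rho(x,y)$ by (\ref{eq:exp}) and a further concentration/Fubini argument over $\omega$; the interchange of ``for typical $\omega$'' and ``for typical $X$'' via Fubini, together with the quantitative gap between superpolynomial $n$ and subexponential $n$ in (\ref{eq:superpol})--(\ref{eq:subexp}), is the delicate bookkeeping that makes both directions hold simultaneously with confidence $\to 1$.
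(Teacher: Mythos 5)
Your proposal is correct and follows essentially the same route as the paper: concentration of the $1$-Lipschitz function $\ve_{NN}$ about its median, then a two-sided identification of that median with ${\mathbb E}\,\rho(x,y)$ using concentration of the distance functions about their medians (your ``deviation below the median'' bound for small balls is exactly the paper's appeal to the Gromov--Milman lemma, read contrapositively) together with the subexponential growth of $n$ for the union bound and a Fubini exchange between typical $\omega$ and typical $X$. The only cosmetic difference is that the paper runs the small-ball estimate from the datapoint side ($\mu(B_{\ve_M}(x))\le\alpha(\gamma/2)$ for each $x\in X$) rather than from the query side, which is the same computation.
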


\begin{remark}
The result should be understood in the asymptotic sense, as follows. We deal with a family of domains $\Omega_d$, $d\in\N$, and the sampling is performed in each of them in an independent fashion, so that ``confidence'' refers to the probability that the infinite sample path belonging to the infinite product
\[\Omega_1^{n_1}\times\Omega_2^{n_2}\times\ldots\times\Omega_{d}^{n_d}\times\ldots\]
satisfies the desired properties.
\end{remark}

For a proof of Lemma \ref{l:radius}, see Appendix A in \cite{pestov:2011b}.

This effect is already noticeable in medium dimensions.
% % cf. Fig. \ref{fig:nns-dist}, where the median value of the distance to the nearest neighbour is $\e_M =  0.69419$, while the expected value of a distance between two points of $X$, ${\mathbb{E}}d(x,y) = 1.2765$.
Let us draw a dataset $X$ with $10,000$ points randomly from the Euclidean cube $[0,1]^{50}$ with regard to the uniform measure. Then, with respect to the usual Euclidean distance, 
the median value of the distance to the nearest neighbour is $\e_M =  1.9701$, while the expected value of a distance between two points of $X$, ${\mathbb{E}}d(x,y) =  2.872$. Cf. Fig. \ref{fig:nn_distances50} for the distribution of values of $\ve_{NN}$. 

\begin{figure}[ht]
 \centering
 \includegraphics[width=.4\textwidth]{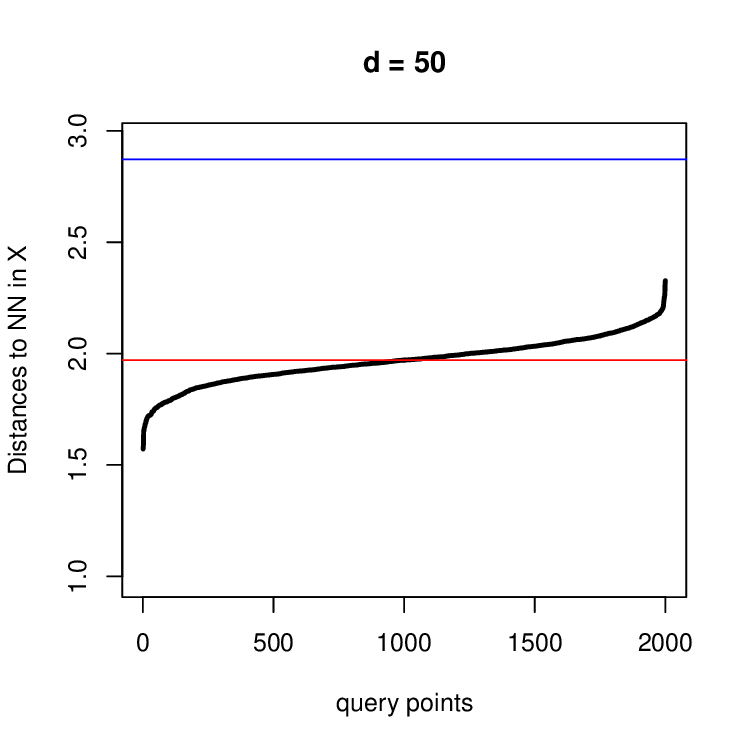}
\caption{Distances from $2,000$ random query points to their nearest neighbours in a dataset of $10,000$ random points in the Euclidean cube $[0,1]^{50}$. The lower horizontal line marks $\e_M =  1.9701$, the upper ${\mathbb{E}}d(x,y) =  2.872$.}
 \label{fig:nn_distances50}
 \end{figure}

%\section{Branching}

% We will re-examine 
\section{A ``naive'' $O(n)$ lower bound}

As a first approximation to our analysis, we present a heuristic argument, allowing {\em linear} in $n$ asymptotic lower bounds on the search performance of a metric tree. 

What happens at an internal node $C$ when a metric tree is being traversed? Note that $C$ itself becomes a metric space with measure if equipped with the metric induced from $\Omega$ and a probability measure $\mu_C$ which is the normalized restriction of the measure $\mu$ from $\Omega$:
\[\mbox{for }A\subseteq C,~~\mu_C(A)=\frac{\mu(A)}{\mu(C)}.\]
Let $\alpha_C$ denote the concentration function of $C$.
Suppose for the moment that our tree is perfectly balanced:
$\mu_C(A)=\mu_C(B)=\frac 12$.
Then the size of the $\ve$-neighbourhood of $A$ is at least $1-\alpha_C(\ve)$, and the same is true of $B_{\ve}$. For all query points $\omega\in C$ except a set of measure
$\leq 2\alpha_C(\ve)$,
the search algorithm \ref{alg:search} branches out at the node $C$. (Cf. Fig. \ref{fig:split3}.)

\begin{figure}[ht]
\centering
\scalebox{0.25}[0.25]{
\includegraphics
%[width=.5\textwidth]
{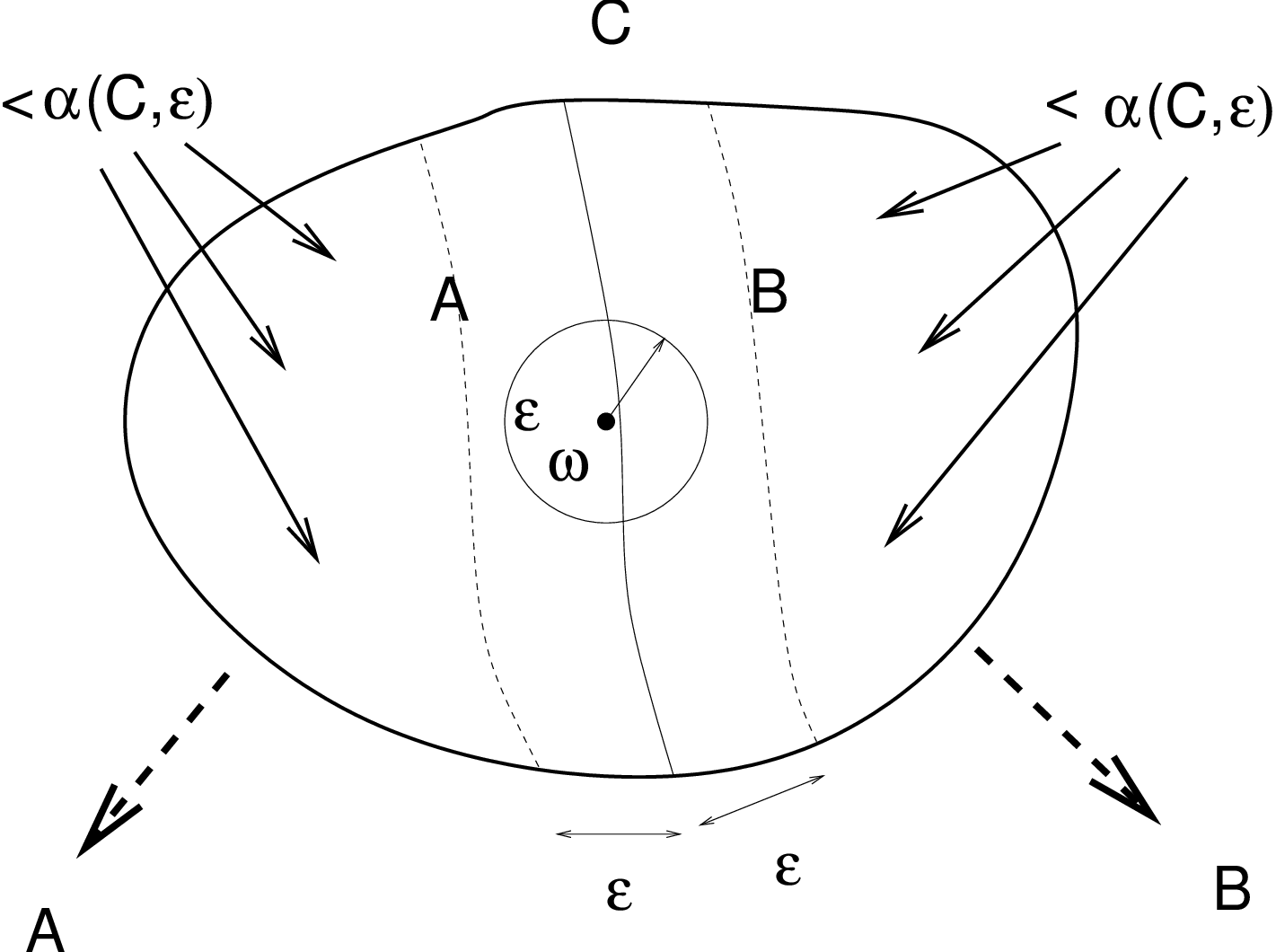}}
\caption{Search algorithm branches out for most query points $\omega$ at a node $C$ if the value $\alpha_C(\ve)$ is small.}
\label{fig:split3}
\end{figure}

\begin{lemma}
Let $C$ be a subset of a metric space with measure $(\Omega,\rho,\mu)$. 
Denote $\alpha_C$ the concentration function of $C$ with regard to the induced metric $\rho\upharpoonright C$ and the induced probability measure $\mu/\mu(C)$. 
Then for all $\ve>0$
\[\alpha_C(\ve) \leq \frac{\alpha_{\Omega}(\ve/2)}{\mu(C)}.\]
\label{l:subspace}
\end{lemma}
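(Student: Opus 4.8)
The plan is to reduce the statement to the Gromov--Milman Lemma~\ref{l:m} by a standard ``halving the radius'' argument. Fix $\ve>0$. I would first dispose of a trivial case: since $\mu_C(A_\ve)\geq\mu_C(A)\geq\frac12$ for every admissible $A$, one always has $\alpha_C(\ve)\leq\frac12$; hence if $\mu(C)\leq 2\alpha_{\Omega}(\ve/2)$ the right-hand side of the claimed inequality is already $\geq\frac12$ and there is nothing to prove. So from now on I assume $\alpha_{\Omega}(\ve/2)<\mu(C)/2$.

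Next, fix any $A\subseteq C$ with $\mu_C(A)\geq\frac12$ and put $B=C\setminus A_\ve^C$, where $A_\ve^C=\{x\in C\colon\rho(x,A)<\ve\}$ denotes the $\ve$-neighbourhood of $A$ computed inside $C$. The goal becomes to show $\mu(B)\leq\alpha_{\Omega}(\ve/2)$: once this is done, $\mu_C(A_\ve^C)=1-\mu(B)/\mu(C)\geq 1-\alpha_{\Omega}(\ve/2)/\mu(C)$, and taking the infimum over all admissible $A$ gives $\alpha_C(\ve)\leq\alpha_{\Omega}(\ve/2)/\mu(C)$, which is the lemma.

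The key geometric observation is that the $(\ve/2)$-neighbourhoods of $A$ and of $B$, now taken in the ambient space $\Omega$, are disjoint: by construction every point of $B$ lies at distance $\geq\ve$ from $A$, so a point within $\ve/2$ of both $A$ and $B$ would force $\rho(A,B)<\ve$. I would then argue by contradiction. Suppose $\mu(B)>\alpha_{\Omega}(\ve/2)$; by Lemma~\ref{l:m} this gives $\mu(B_{\ve/2})>\frac12$. On the other hand $\mu(A)=\mu_C(A)\,\mu(C)\geq\mu(C)/2>\alpha_{\Omega}(\ve/2)$ by the standing assumption, so Lemma~\ref{l:m} applied to $A$ yields $\mu(A_{\ve/2})>\frac12$ as well. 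Since $A_{\ve/2}$ and $B_{\ve/2}$ are disjoint subsets of $\Omega$, the sum of their measures is at most $1$, contradicting that both exceed $\frac12$. Hence $\mu(B)\leq\alpha_{\Omega}(\ve/2)$.

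I do not expect a real obstacle here; the argument is elementary once the two neighbourhoods are seen to be disjoint. The only points requiring slight care are the bookkeeping between neighbourhoods taken inside $C$ versus inside $\Omega$, and the degenerate case of small $\mu(C)$, where the inequality holds merely because $\alpha_C\leq\frac12$ --- a reminder that the factor $1/\mu(C)$ renders the bound informative only when $C$ is not too light.
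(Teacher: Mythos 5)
Your proof is correct and rests on the same key observation as the paper's: two subsets of $C$ at mutual distance $\geq\ve$ in $C$ have disjoint $\ve/2$-neighbourhoods in $\Omega$, which forces $\alpha_{\Omega}(\ve/2)\geq\mu(B)$ for the residual set $B=C\setminus A_{\ve}^{C}$. The paper extracts this directly from the definition of $\alpha_{\Omega}$ (using the complement of the smaller of the two neighbourhoods as the test set of measure $\geq 1/2$), whereas you route it through two applications of Lemma~\ref{l:m} and a contradiction, at the price of the separate degenerate case $\mu(C)\leq 2\alpha_{\Omega}(\ve/2)$ --- a cosmetic difference only.
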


\proof
Let $\ve>0$ be any, and let $\delta<\alpha_C(\ve)$. Then there are subsets $D,E\subseteq C$ at a distance $\geq\ve$ from each other, satisfying $\mu(D)\geq\mu(C)/2$ and $\mu(E)\geq\delta\mu(C)$, in particular the measure of either set is at least $\delta\mu(C)$. Since the $\ve/2$-neighbourhoods of $D$ and $E$ in $\Omega$ cannot meet by the triangle inequality, the complement, $F$, to at least one of them, taken in $\Omega$,
has the property $\mu(F)\geq 1/2$, while
$\mu(F_{\ve/2})\leq 1-\delta\mu(C)$, because $F_{\ve/2}$ does not meet one of the two original sets, $D$ or $E$. We conclude: 
$\alpha_{\Omega}(\ve/2)\geq \delta\mu(C)$,
and taking suprema over all $\delta<\alpha_C(\ve)$,
\[\alpha_{\Omega}(\ve/2)\geq \alpha_C(\ve)\mu(C),\]
that is, $\alpha_C(\ve)\leq\alpha_{\Omega}(\ve/2)/\mu(C)$, as required.
\qed

Since the size of the indexing scheme is $O(n)$, a typical size of a set $C$ will be on the order $\Omega\left(n^{-1}\right)$, while $\alpha_{\Omega}(\ve)$ will go to zero as $o\left(n^{-1}\right)$.

Let a workload $(\Omega,\rho,X)$ 
be indexed with a balanced metric tree of depth $O(\log n)$, having $O(n)$ bins of roughly equal $\mu$-measure.
% measure $\mu$.
% size in the sense of the probability measure $\mu$.
For at least half of all query points, the distance $\ve_{NN}$ to the nearest neighbour in $X$ is at least as large as $\ve_M$, the median NN distance. 
Let $\omega$ be such a query centre.
For every element $C$ of level $t$ partition of $\Omega$, one has, using Lemmas \ref{l:subspace} and \ref{l:radius} and the assumption in Eq. (\ref{eq:dim}),
\[\alpha_C(\ve_M)\leq \frac{\alpha_{\Omega}(\ve_M/2)}{\mu(C)^{-1}}=\Theta(2^t) e^{-\Theta(1)\ve_M^2d}=e^{-\Theta(d)},\]
where the constants {\em do not depend} on a particular internal node $C$.
An argument in Section \ref{s:branching} implies that branching {\em at every internal node} occurs for all $\omega$ except a set of measure 
\[\leq\sharp(\mbox{nodes})\times 2\sup_C\alpha_C(\ve)=O(n^2)e^{-\Theta(d)}=o(1),\]
because $d=\omega(\log n)$ and so $e^{\Theta(d)}$ is superpolynomial in $n$.
Thus, the expected average performance of an indexing scheme as above is linear in $n$. 

There are two problems with this argument. Firstly, it has been observed and confirmed experimentally that unbalanced metric trees can be more efficient than the balanced ones \cite{ChN:05,navarro:2008}. Secondly and more importantly, we have replaced the value of the {\em empirical measure},
\[\mu_n(C)=\frac {\abs C}{n} ,\]
with the value of the underlying measure $\mu(C)$, implicitly assuming that the two are close to each other:
\[\mu_n(C) \approx \mu(C).\]
But the scheme is being chosen {\em after} seeing an instance $X$, and it is reasonable to assume that indexing partitions will take advantage of random clusters always present in i.i.d. data.  
(Fig. \ref{fig:randsq1000} illustrates this point in dimension $d=2$.) Some elements of indexing partitions, while having large $\mu$-measure, may contains few datapoints, and vice versa.

\begin{figure}[ht]
\centering
\includegraphics[width=.3\textwidth,height=.3\textwidth]{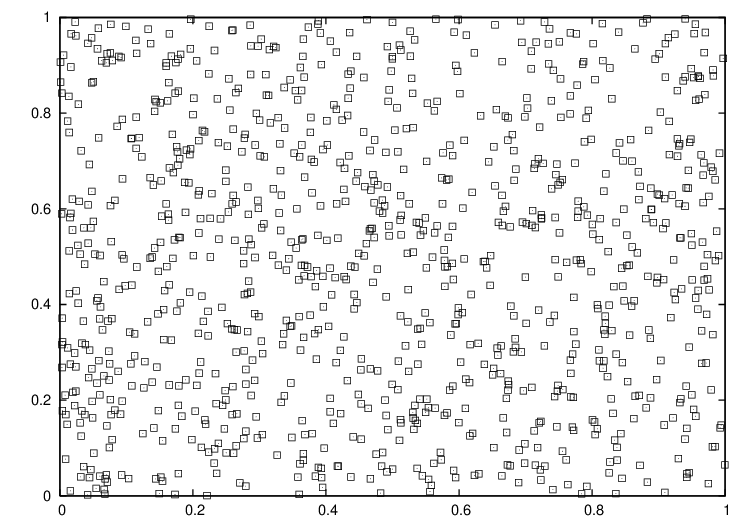}
\caption{$1000$ points randomly and uniformly distributed in the square $[0,1]^2$.}
\label{fig:randsq1000}
\end{figure}

An equivalent consideration is that we only know the concentration function of the domain $\Omega$, but not of a randomly chosen dataset $X$. It seems the  problem of estimating the concentration function of a random sample has not been systematically treated. 

In order to be able to estimate the empirical measure in terms of the underlying distribution, one needs to invoke an approach of statistical learning. 

\section{Vapnik--Chervonenkis theory}

Let $\mathscr C$ be a family of subsets of a set $\Omega$ (a {\em concept class}). One says that a subset $A\subseteq\Omega$ is {\em shattered} by $\mathscr C$ 
%(cf. Fig. \ref{fig:15-1a}) 
if for each $B\subseteq A$ there is $C\in{\mathscr C}$ such that
\[C\cap A = B.\]
% 
% \begin{figure}[ht]
% \centering
% \includegraphics[width=.4\textwidth]{15-1a.eps}
% \caption{A set $B$ is shattered by the class $\mathscr A$.}
% \label{fig:15-1a}
% \end{figure}

The {\em Vapnik--Chervonenkis dimension} $\VC({\mathscr C})$ of a class $\mathscr C$ is the supremum of sizes of finite subsets $A\subseteq\Omega$ shattered by $\mathscr C$.

Here are some examples.

\begin{enumerate}
\item
The VC dimension of the class of all Euclidean balls in $\R^d$ is $d+1$. 
\item
The class of all parallelepipeds in $\R^d$ has VC dimension $2d+2$. 
\item
The VC dimension of the class of all balls in the Hamming cube $\{0,1\}^d$ is bounded from above by $d+\lfloor \log_2 d\rfloor$.
\par
(As every ball is determined by its centre and radius, the total number of pairwise different balls in $\{0,1\}^d$ is $d2^d$. Now one uses an obvious observation: the VC dimension of a finite concept class $\mathscr A$ is bounded above by $\log_2\abs{\mathscr A}$.)
\end{enumerate}

Here is a deeper result.

\begin{theorem}[Goldberg and Jerrum \cite{GJ}, Theorem 2.3]  
Let 
\[{\mathscr F}=\{x\mapsto f(\theta,x)\colon\theta\in \R^s\}\]
be a parametrized class of $\{0,1\}$-valued functions. Suppose that, for each input $x\in\R^n$, there is an algorithm that computes $f(\theta,x)$, and this computation takes no more than $t$ operations of the following types:
\begin{itemize}
\item the arithmetic operations $+,-,\times$ and $/$ on real numbers,
\item jumps conditioned on $>$, $\geq$, $<$, $\leq$, $=$, and $\neq$ comparisons of real numbers, and
\item output $0$ or $1$.
\end{itemize}
Then $\VC({\mathscr F})\leq 4s(t+2)$. \qed
\label{th:gj}
\end{theorem}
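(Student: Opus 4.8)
\emph{Proof plan.} This is the classical Goldberg--Jerrum bound, and the proof runs through the Milnor--Thom--Warren estimates on the number of sign patterns realised by a finite family of real polynomials. Suppose, in order to bound $\VC(\mathscr F)$, that $\mathscr F$ shatters a set $\{x_1,\dots,x_m\}\subseteq\R^n$. By the very definition of shattering, the map
\[
\Phi\colon\R^s\to\{0,1\}^m,\qquad \theta\mapsto\bigl(f(\theta,x_1),\dots,f(\theta,x_m)\bigr),
\]
is surjective, so its image has exactly $2^m$ elements. It therefore suffices to produce an upper bound $N(m,s,t)$ on the number of distinct ``achievable classifications'' $\abs{\mathrm{Im}\,\Phi}$, and then to verify that $2^m\le N(m,s,t)$ forces $m\le 4s(t+2)$.

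The next step is to arithmetise the computation. Fix an input $x_i$ and run the given $t$-step algorithm symbolically in the indeterminates $\theta=(\theta_1,\dots,\theta_s)$. Every value produced by an arithmetic instruction is a rational function $p/q$ of $\theta$; tracking numerator and denominator separately, each of $+,-,\times,/$ at worst doubles the degrees, so after $t$ steps every quantity that can occur is a ratio of polynomials of degree at most $2^t$. A conditional jump compares two such quantities, i.e.\ tests the sign of a single polynomial of degree at most $2^{t+1}$ (clear denominators, keeping track of their signs, which have already been committed to by earlier branches). Collecting, over all root-to-leaf branches of the (depth-$\le t$) computation tree, the polynomials whose signs are ever consulted, one obtains a family $\mathscr P_i$ of $O(2^t)$ polynomials in $\theta$, each of degree $\le 2^{t+1}$, such that $f(\theta,x_i)$ is constant on every cell of the arrangement determined by $\mathscr P_i$.

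The \emph{main obstacle} lives precisely in that last paragraph: division and the data-dependent flow of control must be handled so as to extract a \emph{fixed}, uniformly size- and degree-bounded collection of test polynomials; the naive symbolic evaluation lets degrees blow up as $2^t$, and one must be content to absorb this growth (it is harmless because Warren's bound depends only logarithmically on the degree). Once this bookkeeping is in place the rest is mechanical: put $\mathscr P=\bigcup_{i=1}^m\mathscr P_i$, a family of $N_0=O(m2^t)$ polynomials of degree $\le D_0=2^{t+1}$ in $s$ real variables. Since each coordinate $f(\cdot,x_i)$ of $\Phi$ is constant on the cells of the arrangement of $\mathscr P$, the number of achievable classifications is at most the number of sign vectors $\bigl(\mathrm{sgn}\,P(\theta)\bigr)_{P\in\mathscr P}$ realised for $\theta\in\R^s$, which by the classical estimate is at most $\bigl(c\,D_0N_0/s\bigr)^s$ for an absolute constant $c$.

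Combining this with the first step gives
\[
2^m\;\le\;\Bigl(\frac{c'\,m\,2^{2t}}{s}\Bigr)^{s},
\]
hence $m\le s\bigl(2t+\log_2(c'm/s)\bigr)=2st+s\log_2(c'm/s)$. The term $\log_2(c'm/s)$ is dominated, for $m$ near the threshold, by a constant multiple of $t$, so this self-referential inequality resolves to $m=O(st)$; tracking the constants of Warren's estimate and of the degree count carefully, the slack is exactly enough to conclude $m\le 4s(t+2)$. Thus $\VC(\mathscr F)\le 4s(t+2)$, as claimed. \qed
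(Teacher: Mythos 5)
The paper does not prove this statement: it is quoted verbatim from Goldberg and Jerrum \cite{GJ} with an immediate \verb|\qed|, so there is no internal proof to compare against. Your sketch reproduces the standard (and original) argument --- arithmetise the $t$-step computation into sign tests of polynomials in $\theta$ of degree at most $2^{t+1}$, then bound the number of realisable classifications of $m$ shattered points by the Milnor--Thom--Warren sign-pattern estimate $(cD_0N_0/s)^s$ and solve $2^m\le(\cdot)^s$ for $m$ --- and this is correct in outline, including your identification of the genuine difficulty (extracting a fixed, $\theta$-independent family of test polynomials despite division and data-dependent control flow). The one soft spot is the last step: the claim that ``the slack is exactly enough to conclude $m\le 4s(t+2)$'' is asserted rather than computed, and landing on that precise constant (rather than merely $m=O(st)$, possibly with an extra $s\log s$ or $s\log t$ term) requires tracking Warren's constant and the polynomial count more carefully than your sketch does; for the purposes of this paper only the qualitative bound $\VC({\mathscr F})=O(st)$ is used, so this does not affect anything downstream.
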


Here is a typical result of statistical learning theory, which we quote from \cite{vidyasagar:2003}, Theorem 7.8.
% , where we skip over the measurability assumptions 
% (see \cite{anthbart:99,vapnik:98,vidyasagar:2003,mendelson:03} for details and more).
% (which can be further improved in a number of ways).

\begin{theorem}
\label{th:learning}
Let $\mathscr C\subseteq 2^{\Omega}$ be a concept class of finite VC dimension, $d$. 
Then for all $\epsilon,\delta>0$ and every probability measure $\mu$ on $\Omega$,
if $n$ datapoints in $X$ are drawn randomly and independently acoording to $\mu$, then with confidence $1-\delta$
\[\forall C\in {\mathscr C},~~\left\vert \mu(C)-\frac{\abs{X\cap C}}{n} \right\vert <\epsilon,\]
provided 
\[n\geq  \max\left\{\frac{8d}{\e}\lg\frac{8e}{\e},\frac 4{\e}\lg\frac{2}{\delta} \right\}.
\]
% \[
% n\geq \frac{128}{\e^2}\left(d\log\left(\frac{2e^2}{\e}\log\frac{2e}{\e}\right) + \log\frac 8 {\delta}\right).\mbox{ \qed}\]
\end{theorem}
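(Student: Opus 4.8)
# Proof Proposal for Theorem 6 (Uniform Convergence of Empirical Measures)

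The plan is to follow the classical route for the Vapnik--Chervonenkis uniform convergence theorem: symmetrization, a union bound controlled by the Sauer--Shelah lemma, Hoeffding's inequality for fixed finite configurations, and finally an algebraic inversion of the resulting transcendental bound on $n$. First I would reduce to a two-sample (ghost sample) problem. Draw an independent ghost sample $X'=(X'_1,\dots,X'_n)$, also i.i.d.\ $\mu$, and write $\mu_n,\mu'_n$ for the two empirical measures. A standard symmetrization step shows that, once $n\epsilon^2\geq 2$,
\[
\mathbb{P}\Big(\sup_{A\in\mathscr A}|\mu(A)-\mu_n(A)|>\epsilon\Big)\leq 2\,\mathbb{P}\Big(\sup_{A\in\mathscr A}|\mu_n(A)-\mu'_n(A)|>\epsilon/2\Big),
\]
the factor $2$ arising from a Chebyshev estimate guaranteeing that for the offending set $A$ the ghost sample is $\epsilon/2$-close to $\mu(A)$ with probability at least $1/2$.

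Next, on the pooled sample $Z=(X,X')$ of $2n$ points, exchangeability permits inserting Rademacher signs: $\sup_{A}|\mu_n(A)-\mu'_n(A)|$ has the same law as $\sup_{A}\frac1n\big|\sum_{i=1}^n\sigma_i(\mathbf 1_A(Z_i)-\mathbf 1_A(Z_{i+n}))\big|$ with $\sigma_i$ independent and uniform on $\{\pm1\}$. Conditioning on $Z$, the map $A\mapsto(\mathbf 1_A(Z_1),\dots,\mathbf 1_A(Z_{2n}))$ takes at most $\Pi_{\mathscr A}(2n)$ distinct values, and by the Sauer--Shelah lemma $\Pi_{\mathscr A}(2n)\leq(2en/d)^d$ whenever $2n\geq d$. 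For each fixed value, the sign-sum is a sum of $n$ independent terms each lying in $\{-1,0,1\}$, so Hoeffding's inequality bounds its tail by $2\exp(-n\epsilon^2/8)$. A union bound over the $\Pi_{\mathscr A}(2n)$ configurations, together with the factor $2$ from symmetrization, gives
\[
\mathbb{P}\Big(\sup_{A\in\mathscr A}|\mu(A)-\mu_n(A)|>\epsilon\Big)\leq 4\Big(\frac{2en}{d}\Big)^d\exp\!\Big(-\frac{n\epsilon^2}{8}\Big).
\]

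Finally, I would force the right-hand side below $\delta$. Taking logarithms, it suffices that $\frac{n\epsilon^2}{8}\geq\log 4+d\log\frac{2en}{d}+\log\frac1\delta$, an inequality transcendental in $n$. The standard device is a calculus lemma: an inequality of the form $n\geq A+B\log n$ holds as soon as $n$ exceeds an explicit constant multiple of $A+B\log B$. A double application of this lemma---the first to absorb the $\log n$, the second to dispose of the residual nested logarithm coming from the $1/\epsilon$ dependence hidden inside $B=8d/\epsilon^2$---produces a sufficient condition of exactly the advertised shape, with the nested logarithm $\log\big(\frac{2e^2}{\epsilon}\log\frac{2e}{\epsilon}\big)$ and with the numerical constant $128$ absorbing the factor $8$ from Hoeffding together with the slack introduced at the symmetrization and inversion steps.

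The main obstacle is purely bookkeeping: tracking the numerical constants cleanly through symmetrization (the $n\epsilon^2\geq2$ caveat and its factor $2$), through the Hoeffding exponent, and above all through the inversion of the transcendental inequality, so that everything collapses to the stated bound rather than a messier equivalent. None of the steps is deep; the delicate part is choosing the intermediate constants so that the final constant is precisely $128$ and the nested logarithm appears in the stated form. For the measure-theoretic regularity needed to make $\sup_{A\in\mathscr A}|\mu(A)-\mu_n(A)|$ a bona fide random variable, we refer to the cited monographs \cite{anthbart:99,vapnik:98,vidyasagar:2003,mendelson:03}.
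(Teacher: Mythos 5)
The paper offers no proof of this theorem at all: it is imported as a ``typical result of statistical learning theory,'' with the measurability issues explicitly waved off and the reader referred to the cited monographs of Anthony--Bartlett, Vapnik, Vidyasagar and Mendelson. So there is no internal argument to compare yours against; the relevant comparison is with the classical proof in those references, and your outline is precisely that proof: symmetrization by a ghost sample (with the $n\epsilon^2\geq 2$ proviso and the Chebyshev factor of $2$), randomization by Rademacher signs justified by exchangeability, the Sauer--Shelah bound $\Pi_{\mathscr A}(2n)\leq(2en/d)^d$ to control the union bound, Hoeffding's inequality for each fixed trace, and the inversion of the resulting bound $4(2en/d)^d\exp(-n\epsilon^2/8)\leq\delta$ into an explicit sample-size condition. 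All of these steps are sound and standard. The only part you leave genuinely undone is the final bookkeeping --- verifying that the inversion lemma, applied twice, yields exactly the constant $128$ and the nested logarithm $\log\bigl(\frac{2e^2}{\epsilon}\log\frac{2e}{\epsilon}\bigr)$ in the stated form rather than some equivalent up to constants; since the paper only ever uses this theorem qualitatively (any polynomial sample-complexity bound in $d$, $1/\epsilon$, $\log(1/\delta)$ would serve in the proof of Theorem \ref{th:main}), this omission is harmless, but a self-contained write-up would need to carry it out or simply cite the specific numbered theorem in one of the referenced texts where these constants appear.
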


% For statistical learning theory, we refer to  \cite{anthbart:99,vapnik:98,vidyasagar:2003},
% or a set of lecture notes \cite{mendelson:03}.

Let $\mathscr F$ be a class of (possibly partially defined) real-valued functions on $\Omega$. Define ${\mathscr F}_{\geq}$ as the family of all sets of the form
\[\{\omega\in{\mathrm{dom}\, f}\colon f(\omega)\geq a\},~~a\in\R.\]
The value of $\VC({\mathscr F}_{\geq})$
%\[p({\mathscr F}) = \VC(\{\eta(f-a)\colon a\in\R,f\in {\mathscr F}).\]
%Here $\eta$ is the Heaviside (step) function. The above parameter 
is bounded above by the Pollard dimension (pseudodimension) of $\mathscr F$ (cf. \cite{vidyasagar:2003}, 4.1.2), but is in general smaller. 

\begin{example}[Pivots]
\label{ex:pivots}
If $\mathscr F$ is the class of all distance functions to points of $\R^d$, then $\VC({\mathscr F}_{\geq})=d+1$. (The family ${\mathscr F}_{\geq}$ consists of complements to open balls, and the VC dimension is invariant under proceeding to the complements.) For the Hamming cube, $\VC({\mathscr F}_{\geq})\leq d+\lfloor \log_2 d\rfloor$.
% , but we do not know an exact bound.
\end{example}

% Lemma 4.11 in \cite{vidyasagar:2003} implies:
% 
% \begin{lemma}
% Denote by ${\mathscr F}_{\gtreqless}$ the class of all subsets of $\Omega$ of the form
% \begin{equation}
% \{\omega\in{\mathrm{dom}\, f}\colon f(\omega)\geq a\}\mbox{ or }
% \{\omega\in{\mathrm{dom}\,f}\colon f(\omega)\leq a\},~~f\in {\mathscr F},~~a\in\R.
% \label{eq:ineq}
% \end{equation}
% Then $\VC({\mathscr F}_{\gtreqless})\leq 2p({\mathscr F})+1$. \qed
% \end{lemma}

% We are interested in the parameter $\VC({\mathscr F}_{\gtreqless})$, that is, the VC dimension of ${\mathscr F}_{\gtreqless}$. As far as we know, it has not been studied on its own right. This parameter is bounded above by the Pollard dimension (pseudodimension) of $\mathscr F$ (cf. \cite{vidyasagar:2003}, 4.1.2).

% It is usually easy to estimate pseudodimention of function classes where decision functions of metric trees of various types come from.

\begin{example}[$vp$-tree] See Example \ref{ex:vptree}.
%
%\section{Examples of indexing schemes}
%\subsection{$vp$-tree}
% The {\em vp-tree} \cite{Yan} uses decision functions of the form
% \[f_t(\omega)=(1/2)(\rho(x_{t_+},\omega)-\rho(x_{t_-},\omega)),\]
% where $t_{\pm}$ are two children of $t$ and $x_{t_{\pm}}$ are
%the {\em vantage points} for the node $t$.
If $\Omega=\R^d$, then ${\mathscr F}_{\geq}$ consists of all half-spaces, and the VC dimension of this family is well known to equal $d+1$.
% (in consequence of the classical Radon theorem in convex geometry, just like the proof for Euclidean spheres).
\end{example}

\begin{example}[$M$-tree] See Example \ref{ex:mtree}. 
%\subsection{$M$-tree}
% The {\em M-tree} \cite{CPZ97} employs decision
% functions 
% \[f_t(\omega)=\rho(x_t,\omega)-\sup_{\tau\in B_t}
% \rho(x_t,\tau),\]
% where 
% $B_t$ is a block corresponding to the node $t$, $x_t$ is a datapoint
% chosen for each node $t$, and suprema on the r.h.s. are precomputed and stored.
The dimension estimates are the same as in Example \ref{ex:pivots}.
\end{example}

For both schemes, 
if $\Omega=\R^d$ or $\{0,1\}^n$, then $\VC(({\mathscr F})_{\geq})$
% , where $\mathscr F$ denotes the class of all possible decision functions, is
equals $d+1$.
A similar conclusion holds for the Hamming cube.

% 
% Here is one of many existing analogues of the concept of VC dimension for classes of functions. Let $\mathscr F$ be a class of (possibly partially defined) real-valued functions on $\Omega$. Denote by ${\mathscr F}_{\gtreqless}$ the class of all subsets of $\Omega$ of the form
% \begin{equation}
% \{\omega\in{\mathrm{dom}\, f}\colon f(\omega)\geq a\}\mbox{ or }
% \{\omega\in{\mathrm{dom}\,f}\colon f(\omega)\leq a\},~~f\in {\mathscr F},~~a\in\R.
% \label{eq:ineq}
% \end{equation}
% 
% The {\em Vapnik pseudodimension of $\mathscr F$} is the VC dimension of the concept class ${\mathscr F}_{\gtreqless}$.
% 
% For example, if $\mathscr F$ is the class of all distance functions to points of $\R^d$, the Vapnik pseudodimension of $\mathscr F$ is $2(d+1)$.
% It is usually easy to estimate pseudodimention of function classes where decision functions of metric trees of various types come from.
% 
% %\begin{example}[$vp$-tree]
% 

\section{Rigorous lower bounds}

In this Section we prove the following theorem under general assumptions of Section \ref{s:model}.

\begin{theorem}
\label{th:main}
Let the domain $\Omega$ equipped with a metric $\rho$ and probability measure $\mu$ have concentration dimension $\Theta(d)$ (cf. Eq. (\ref{eq:dim})) and expected distance between two points ${\mathbb{E}}d(x,y)=1$.
Let $\mathscr F$ be a class of all 1-Lipschitz functions on the domain $\Omega$ that can be used as decision functions for metric tree indexing schemes of a given type. Suppose $\VC({\mathscr F}_{\geq})=o(n^{1/4}/\log^2n)$.
Let $X=\{x_1,x_2,\ldots,x_n)$ be an instance of an i.i.d. random sample of $\Omega$ following the distribution $\mu$, where $d=n^{o(1)}$ and $d=\omega(\log n)$. 
Then an optimal metric tree indexing scheme for the similarity workload $(\Omega,\rho,X)$ has expected average runtime $\Omega(n^{1/4})$.
% In particular, the average search time for an exact nearest neighbour using a metric tree is superpolynomial in dimension $d$.
\end{theorem}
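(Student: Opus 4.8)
# Proof Proposal

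The plan is to combine the three ingredients already assembled: the concentration of the nearest-neighbour radius (Lemma~\ref{l:radius}), the subspace concentration bound (Lemma~\ref{l:subspace}), and the uniform Vapnik--Chervonenkis estimate (Theorem~\ref{th:learning}), in order to repair the flawed heuristic of the ``naive'' section. The key realization is that the gap in the heuristic argument — the unjustified replacement of the empirical measure $\mu_n(C)$ by the underlying measure $\mu(C)$ — is exactly the kind of discrepancy that VC theory controls, \emph{provided} the family of sets $C$ arising as nodes of an admissible metric tree has polynomial VC dimension. First I would argue that every node $B_t$ of a metric tree built from decision functions in $\mathscr F$ is obtained from the root $\Omega$ by finitely many intersections with sets of the form $\{f\geq 0\}$ or $\{f\leq 0\}$, $f\in\mathscr F$; since the metric tree has depth $O(\log n)$ (a tree with superpolynomially-small bins has at least $n^{\omega(1)}$ useless leaves otherwise, so WLOG depth is $O(\log n)$), each node is a Boolean combination of $O(\log n)$ sets from $\mathscr F_{\geq}$ and its complements. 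The class of all such nodes therefore has VC dimension $O(\VC(\mathscr F_{\geq})\cdot\log n\cdot\log\log n)=d^{O(1)}\cdot\mathrm{polylog}(n)$, still polynomial in $d$ because $d=\omega(\log n)$, hence $\log n = d^{o(1)}$.

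Next I would apply Theorem~\ref{th:learning} to this node class with $\epsilon = n^{-1/2}$, say, or more carefully with any $\epsilon$ tending to $0$ slowly enough that the sample-size requirement $n\geq \frac{128}{\epsilon^2}(D\log(\cdots)+\log\frac{8}{\delta})$ is met, where $D=d^{O(1)}\mathrm{polylog}(n)$ is the node-class VC dimension; since $n=d^{\omega(1)}$ grows faster than any polynomial in $d$, there is ample room. This gives, with confidence $1-\delta$ simultaneously over \emph{all} possible nodes of \emph{all} admissible metric trees, $|\mu(C)-\mu_n(C)|<\epsilon$ for every node $C$. In particular, any node $C$ containing at least a $(2/n)$-fraction of the $n$ datapoints (equivalently $|X\cap C|\geq 2$, so that $C$ could conceivably be useful) has $\mu(C)\geq 2/n - \epsilon = \Omega(1/n)$ once $\epsilon = o(1/n)$; to get $\epsilon=o(1/n)$ one pays only a factor $n^2$ in the sample-size bound, which is absorbed since $n=d^{\omega(1)}$. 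Now the heuristic goes through rigorously: for any useful node $C$, by Lemma~\ref{l:subspace} together with Eq.~(\ref{eq:dim}) and Lemma~\ref{l:radius},
\[
\alpha_C(\ve_M) \leq \frac{\alpha_\Omega(\ve_M/2)}{\mu(C)} = O(n)\cdot e^{-\Theta(d)} = e^{-\Theta(d)},
\]
uniformly over all such $C$, because $e^{\Theta(d)}$ is superpolynomial in $n$ (since $d=\omega(\log n)$). Hence at every useful internal node $C$ the search branches out for all query centres $\omega\in C$ outside a set of $\mu$-measure $\leq 2\alpha_C(\ve_M)=e^{-\Theta(d)}$.

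To conclude the lower bound on expected average performance, I would count: at least half the query centres have $\ve_{NN}(\omega)\geq\ve_M$ (by definition of the median, using Lemma~\ref{l:radius} to identify $\ve_M$ with $\E\rho\approx 1$); fix such an $\omega$ and consider the set $S$ of internal nodes visited by the search. If $|S|$ is superpolynomial in $d$ we are done for that $\omega$; otherwise, taking a union bound over the $\leq |S|\cdot(\text{number of trees considered})$ relevant events — but in fact we only need the union bound over the \emph{actual} nodes of the one optimal tree, so over at most $O(n)$ nodes — the probability (over $\omega$) that branching fails at \emph{any} visited useful node is at most $O(n)\cdot e^{-\Theta(d)}=o(1)$. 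Therefore for $1-o(1)$ of query centres the search branches at every useful internal node along its path, forcing it to visit a number of nodes that is exponential in the depth at which useful nodes persist; since the tree must separate $n$ datapoints into $O(n)$ bins and useful nodes (those with $\geq 2$ datapoints, of which there are $\Omega(n)$) lie at depth up to $\Omega(\log n)$, the number of visited nodes is $2^{\Omega(\log n \cdot (1-o(1)))} = n^{\Omega(1)} = d^{\omega(1)}$. Taking expectations over $\omega$ and the confidence over $X$ yields the claimed $d^{\omega(1)}$ expected average performance.

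The main obstacle I anticipate is the bookkeeping in the second-to-last step: one must argue that branching at ``most'' internal nodes along a search path genuinely forces a superpolynomial number of node visits, rather than merely a superpolynomial number at a \emph{single} level which could be small if the useful part of the tree is shallow. The resolution is that an admissible metric tree indexing $n$ points with only $O(n)$ total storage cannot have all its mass concentrated in shallow useful nodes — if every bin had $\ge n^{1-o(1)}$ datapoints the tree would be useless (linear scan of a bin), and if bins are small the tree is deep — so there is necessarily a ``useful frontier'' at depth $\omega(\log d)$ at least, across which branching is total with probability $1-o(1)$. Making the depth-versus-bin-size trade-off precise and robust to the adversary's (data-dependent) choice of tree is the delicate point; everything else is a direct assembly of the quoted lemmas.
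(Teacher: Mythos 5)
Your overall architecture (bound the VC dimension of the class of tree nodes, transfer empirical to true measure via Theorem \ref{th:learning}, then invoke concentration) matches the first half of the paper's proof, but your endgame rigorizes the ``naive'' branching argument, and that is exactly where it breaks. The decisive quantitative gap is your claim that one can take $\epsilon=o(1/n)$ in Theorem \ref{th:learning}: the sample-size condition reads $n\gtrsim D/\epsilon^2$, where $D$ is the VC dimension of the node class, so with $n$ samples the uniform deviation is at best of order $\sqrt{D/n}=n^{-1/2+o(1)}$, which is enormously larger than $1/n$. There is no ``factor $n^2$ to absorb'' --- the sample size is $n$ itself and cannot be increased. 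Consequently you cannot conclude $\mu(C)=\Omega(1/n)$ for a node containing $\geq 2$ datapoints: a set of true measure essentially zero may capture up to about $\sqrt{Dn}$ sample points while staying within the VC deviation bound, and the adversary, who builds the tree after seeing $X$, can exploit exactly such sets (this is the clustering phenomenon the paper warns about). Without the lower bound on $\mu(C)$ your application of Lemma \ref{l:subspace} collapses, and with it the claim that branching occurs at every ``useful'' node. Your second, acknowledged, difficulty --- that branching at every useful node along a path does not obviously force $n^{\Omega(1)}$ node visits, since the visited subtree may degenerate into long non-branching chains through low-measure children --- is also real and is not resolved by your sketch; nor is your ``WLOG depth $O(\log n)$'' reduction (the paper's reduction is the weaker and correct one: WLOG the \emph{search} depth is $d^{O(1)}$ for most queries, else the bound is immediate).

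The paper sidesteps both problems by replacing the level-by-level branching count with the ``Bin Access Lemma'' (Lemma \ref{l:bal}), which needs only an \emph{upper} bound on the measure of bins, and only at the achievable scale $n^{-1/2+o(1)}$. The argument is a dichotomy: either a constant fraction of query centres have their $\ve^\prime$-ball meeting some bin with $\geq n^{1/2-\beta}$ datapoints (and scanning such a bin already costs $n^{1/2-\beta}=d^{\omega(1)}$ operations), or else most query balls meet only bins with fewer points; by Theorem \ref{th:learning} and Lemma \ref{l:bins} those bins have true measure $n^{-1/2+o(1)}$, and Lemma \ref{l:bal} --- a pure concentration statement about a cover of $\Omega$ by small sets, requiring no induction on tree levels --- shows that the $\ve^\prime$-neighbourhood of a typical $\omega$ meets $n^{1/4-\beta/2}=d^{\omega(1)}$ of them, each of which must be opened because no $1$-Lipschitz certificate can prune a bin whose $\ve^\prime$-neighbourhood contains $\omega$. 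To salvage your write-up you should adopt this dichotomy and the Bin Access Lemma in place of the branching count.
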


The following is a direct application of Lemma 4.2 in \cite{pestov:00}.

\begin{lemma}[``Bin Access Lemma'']
\label{l:bal}
Let $\varepsilon>0$ and $m\geq 4$ be such that $\alpha_{\Omega}(\ve)\leq m^{-1}$, and let $\gamma$ be a collection of 
subsets $A\subseteq\Omega$ of measure 
$\mu(A)\leq m^{-1}$ each, satisfying $\mu(\cup\gamma)\geq 1/2$. 
Then the $2\varepsilon$-neighbourhood of every point 
$\omega\in \Omega$, apart from a set of measure at most
$\frac 12 m^{-\frac 12}$,
meets at least $\frac 1 2 m^{\frac 12}$
elements of $\gamma$.
\end{lemma}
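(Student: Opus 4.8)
The plan is to recast the statement in terms of neighbourhoods and then reduce to a counting estimate driven by concentration. For $\omega\in\Omega$ write
\[N(\omega)=\#\{A\in\gamma\colon {\mathcal B}_{2\ve}(\omega)\cap A\neq\emptyset\}=\#\{A\in\gamma\colon \omega\in A_{2\ve}\},\]
so that the claim becomes: $N(\omega)\geq \tfrac12 m^{1/2}$ for all $\omega$ outside a set of measure $\leq\tfrac12 m^{-1/2}$. The difficulty is that the individual sets $A$ are too small for concentration to say anything useful about them directly: we only know $\mu(A)\leq m^{-1}$, and since $\alpha_{\Omega}(\ve)\leq m^{-1}$ as well, Lemma \ref{l:m} gives nothing for a single $A$. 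The remedy is to \emph{group} the sets so that each group crosses this threshold.

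First I would partition $\gamma$ greedily into blocks $G_1,\dots,G_L$ by accumulating sets into a current block until the measure of its union first exceeds $m^{-1}$, then starting a new block. Writing $V_j=\bigcup_{A\in G_j}A$, every completed block satisfies $m^{-1}<\mu(V_j)\leq 2m^{-1}$, the upper bound because one extra set adds at most $m^{-1}$; discarding the at most one incomplete trailing block and using $\mu(\bigcup\gamma)\geq\tfrac12$ together with $\sum_j\mu(V_j)\geq\mu(\bigcup_jV_j)$, one gets $L\geq \tfrac14 m-O(1)$ completed blocks. Now $\mu(V_j)>m^{-1}\geq\alpha_{\Omega}(\ve)$, so Lemma \ref{l:m} yields $\mu\bigl((V_j)_{\ve}\bigr)>\tfrac12$; applying the definition of the concentration function to the set $(V_j)_\ve$ of measure $>\tfrac12$ gives $\mu\bigl(((V_j)_\ve)_\ve\bigr)\geq 1-\alpha_{\Omega}(\ve)\geq 1-m^{-1}$, and by the triangle inequality $((V_j)_\ve)_\ve\subseteq (V_j)_{2\ve}$. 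Hence every completed block has a nearly full $2\ve$-neighbourhood:
\[\mu\bigl(\Omega\setminus (V_j)_{2\ve}\bigr)\leq m^{-1}.\]

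The decisive observation is that being $2\ve$-close to the union $V_j$ forces being $2\ve$-close to one of its constituent sets: if $\omega\in (V_j)_{2\ve}$ then $d(\omega,A)<2\ve$ for some $A\in G_j$, and since the blocks are index-disjoint these witnessing sets are distinct across $j$. Therefore
\[N(\omega)\geq \#\{j\colon\omega\in (V_j)_{2\ve}\}=L-M(\omega),\qquad M(\omega):=\#\{j\colon\omega\notin (V_j)_{2\ve}\}.\]
Integrating, $\int_{\Omega} M\,d\mu=\sum_j\mu\bigl(\Omega\setminus(V_j)_{2\ve}\bigr)\leq Lm^{-1}$, so Markov's inequality controls the set where $M$ is large: since $L$ is of order $m$ while we only need to guarantee $N\geq\tfrac12 m^{1/2}$, one has $L-\tfrac12 m^{1/2}\geq\tfrac12 L$, and the exceptional set $\{M>L-\tfrac12 m^{1/2}\}$ has measure at most $\tfrac{Lm^{-1}}{L/2}=2m^{-1}\leq\tfrac12 m^{-1/2}$ for $m$ large. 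I expect the only real labour to be the bookkeeping that tunes these constants to the exact values $\tfrac12 m^{\pm1/2}$ in the statement, together with a direct check of the few small values of $m\geq4$, where already $\mu\bigl(\Omega\setminus(\bigcup\gamma)_\ve\bigr)\leq m^{-1}$ settles the bound $N\geq 1$; the conceptual core is entirely the grouping step followed by the two-step Gromov--Milman/concentration inflation of each block.
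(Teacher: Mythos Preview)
Your argument is correct. The paper does not give its own proof of this lemma: it simply records it as ``an immediate consequence of Lemma~4.2 in \cite{pestov:00}'' and moves on, so there is no in-paper argument to compare against. Your strategy---greedily group the sets of $\gamma$ into blocks $V_j$ of measure just exceeding $m^{-1}$, apply Lemma~\ref{l:m} to each block to get $\mu\bigl((V_j)_\ve\bigr)>\tfrac12$, inflate once more via the definition of $\alpha_\Omega$ to obtain $\mu\bigl((V_j)_{2\ve}\bigr)\geq 1-m^{-1}$, and then control the exceptional set by Markov on the number of missed blocks---is the natural self-contained route and is essentially how the cited result is established. The constant-tuning for small $m$ that you flag is genuine but routine; one sharpening that helps is to keep the Markov denominator as $L-\tfrac12 m^{1/2}$ rather than rounding it down to $\tfrac12 L$.
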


Here is the next step in the proof.

\begin{lemma}
Let $\mathscr F$ be a family of real-valued functions satisfying $\VC({\mathscr F}_{\geq})\leq p$. Denote $\mathscr B$ the class of all subsets $B\subseteq\Omega$ appearing as intersections of $\leq h$ sets of the form $[f \gtreqqless a]$, $f\in {\mathscr F}$. Then
\[\VC({\mathscr B})\leq 4hp\log(2hp).\]
\label{l:bins}
\end{lemma}

\proof
Use Th. 4.5 in \cite{vidyasagar:2003}: if $\mathscr A$ is a concept class of VC dimension $\leq p$, then the VC dimension of the class of all sets obtained as intersections of $\leq h$ sets from $\mathscr F$ is bounded by $2h p\log(hp)$.\qed

\proof
We can suppose that 
the expected average depth of a tree traversed is $o(n^{1/4})$, for otherwise there is nothing to prove. 

Using Eq. (\ref{eq:exp}) and Lemma \ref{l:radius}, pick any $\ve^\prime>0$ such that, for sufficiently high values of $d$, for most points $\omega$ (that is, for a set of $\mu$-measure $1-o(1)$) the value of $\ve_{NN}(\omega)$ exceeds $\ve^\prime$. 
Similarly, we can assume that 
query points of $\mu$-measure $1-o(1)$ have the property that their $\ve^\prime$-neighbourhood only meets bins with fewer than $n^{1/4}$ datapoints. 
(Otherwise, already scanning the contents of large bins would result in an expected running time $\Omega(n^{1/4}$.)

Combining the two assumptions together, we deduce that for a set $\Omega^\prime$ of query centres $\omega$ of $\mu$-measure $1-o(1)$ the following are true: (1) the $\ve^\prime$-ball around $\omega$ only meets bins with fewer than $n^{1/4}$ points, and (2) the depth of every search tree beginning with $\omega$ does not exceed $n^{1/4}$.

Let $b=\{t_0,t_1,\ldots,t_k=t\}$ be a branch of the search tree corresponding to a query point $\omega\in\Omega^{\prime}$.
%and $\e_0,\e_1,\ldots,\e_{k-1}$ the corresponding values of the decision functions $f_{t_i}$, $i=0,1,\ldots,k-1$. 
Let $\Omega_b$ denote the set of all $\omega\in\Omega^\prime$ for which the branch $b$ has to be followed. Then $\Omega_b\subseteq B_t$, and so $\Omega_b$ contains fewer than $n^{1/4}$ datapoints. Also, $\Omega_b$ is the intersection of a family of $\leq n^{1/4}$ sets of the form $[f \gtreqqless a]$, $f\in {\mathscr F}$. By Lemma \ref{l:bins} and our assumption on $\mathscr F$, the VC dimension of the collection, $\mathscr B$, of all possible sets $\Omega_b$ emerging in this fashion is $o(n^{1/2}/\log n)$.

Apply Theorem \ref{th:learning} to the concept class $\mathscr B$
with $\ve = n^{-1/2}$. If $n$ is sufficiently large, then with high confidence the $\mu$-measure of every element of $\mathscr B$ does not differ from the empirical measure (which is $\leq n^{-3/4}$) by more than $\ve = n^{-1/2}$. One concludes: with high confidence, the sets $\Omega_q$, $q\in\Omega^\prime$ have $\mu$-measure $\leq 2n^{-1/2}$.

The Bin Access Lemma \ref{l:bal}, applied with $m=2n^{1/2}$ and $\ve=\ve^\prime/2$, implies that for all $\omega\in\Omega^\prime$ the $\ve^\prime$-neighbourhood of $\omega$ meets at least $O(n^{1/4})$ pairwise different sets of the form $\Omega_b$ as above. Since $\mu(\Omega^\prime)=1-o(1)$, this implies the need to traverse on average $\Omega(n^{1/4})$ distinct branches of the search tree, establishing the claim.
\qed

\smallskip
Combining our Theorem \ref{th:main} with Theorem \ref{th:gj} of Goldberg and Jerrum shows that for all practical purposes the expected average performance of metric trees is superpolynomial in dimension of the domain.

\begin{corollary}
Let the domain $\Omega=\R^d$ be equipped with a probability measure $\mu_d$ in such a way that the concentration function of $(\R^d,\mu_d)$ admits a gaussian upper bound and the $\mu_d$-expected value of the Euclidean distance is $\Theta(1)$. Let ${\mathscr F}_d$ denote a class of functions $f(\theta,x)$ on $\R^d$ parametrized with $\theta$ taking values in a space $\R^{{{\mathrm{poly}\,}}(d)}$ and such that computing each value $f(\theta,x)$ takes $d^{O(1)}$ operations of the type described in Thm. \ref{th:gj}.
Let $X$ be an i.i.d. random sample of $\R^d$ according to $\mu_d$, having $n$ points, where $d=n^{o(1)}$ and $d=\omega(\log n)$. 
Then, with confidence asymptotically approaching $1$, an optimal metric tree indexing scheme for the similarity workload $(\Omega,\rho,X)$ whose decision functions belong to the parametrized class $\mathscr F$ has expected average runtime $d^{\omega(1)}$. \qed
\end{corollary}

Three remarks are in order to explain the strength of the above results.

(1) Measures $\mu_d$ satisfying the above assumption include, for instance, the  gaussian distribution, the uniform measure on the unit ball, on the unit sphere, on the unit cube, etc.

(2) A polynomial upper bound on the size of the parameter $\theta$ for $\mathscr F$ is dictated by the obvious restriction that reading off a parameter of superpolynomial length leads to a superpolynomial lower bound on the length of computation. 

(3) In the situations of interest, one can verify that the expected number of datapoints $x\in X$ contained in the smallest query ball meeting $X$ is $O(1)$. For continuous measures on $\R^n$ such as the gaussian measure or the uniform measure on the cube etc., this will be obviously $1$. For the Hamming cube, the upper limit of this number as $d\to\infty$ is bounded by $e\approx 2.7182\ldots$. Thus, the lower bound does not come from the fact that there are simply too many valid near neighbours.

(4) We do not know the answer to the following.

% $1000$ random pts in $[0,1]^2$.}
{\bf Question.} Cost of computing the values of decision functions aside, can a dataset $X\subset\{0,1\}^d$, $n=\abs{X}$, $d=\omega(\log n)$, $d=n^{o(1)}$, be indexed with a metric tree performing in time $\mbox{poly}(d)$?

\section{Conclusion}

In this Section, written in response to referee's comments, the author will try to outline his understanding of applicability of the method of proof to other indexing paradigms.

The approach to obtaining lower bounds on performance of indexing schemes adopted in this paper consists in combining simple concentration of measure considerations with the basic techniques of statistical learning (VC theory).
The argument is applicable to the situation of the following kind. 
Let $W=(\Omega,\rho,X)$ denote a similarity workload. An indexing scheme for $W$ consists of a family of real-valued $1$-Lipschitz functions $f_i$, $i\in I$ on $\Omega$, which are in general partially defined: ${\mathrm{dom}}\,(f_i)\subseteq\Omega$.
%\begin{equation}
%\abs{f_i(x)-f_i(y)}\leq d(x,y).\end{equation} 
Given a query $(\omega,\ve)$, where $\omega\in\Omega$ and $\e>0$, the algorithm chooses recursively a sequence of indices $i_n$, based on the previous values $f_{i_k}(\omega)$, $k < n$. 
At some point, the computation is terminated, and the values $f_{i_k}(\omega)$ point at a collection of bins, whose contents are read off. 
The role of the functions $f_i$ is to discard those datapoints (or the entire bins) which cannot possibly answer the query. Namely, if $\abs{f_i(\omega)-f_i(x)}\geq\ve$, then, since $f_i$ is a $1$-Lipschitz function, one has $d(\omega,x)\geq \ve$, and so the point $x$ is irrelevant. All the points (or entire bins) which cannot be discarded are returned and their contents checked against the condition $d(x,\omega)<\ve$. 

On the spaces of high dimension, every $1$-Lipschitz function concentrates sharply near its mean (or median) value. If in addition we assume that the class $\mathscr F$ of all functions used for a particular indexing scheme has a low complexity in the sense of VC dimension, we can conclude that the number of points discarded by every function $f_i$ drops off fast as dimension $d$ of the domain grows, resulting in degrading performance. 

So far, we are aware of essentially two different types of such indexing schemes: metric trees (treated in the present paper) and pivot tables \cite{bustos:03}. For pivots, the methods of the present paper have been subsequently used to derive an expected average performance lower bound $\Omega\left({n}/{d\log n}\right)$ \cite{VolPest09}. It is not clear to the author how to state a more general result from which both estimates would follow, nor whether such a result would be useful in view of lack of other examples.

Even if the cell-probe model has some formal similarities with the metric tree scheme (a hierarchical tree structure, a collection of cells as an indexing scheme, computations performed at each node with a limited number of cells accessed, etc.), it is not clear whether the partially defined functions determined by the  algorithm at each node will be $1$-Lipschitz (they are taking values in the Hamming cube). The examples of implemented indexing schemes for {\em exact} nearest neighbour search known to this author seem to be using $1$-Lipschitz functions, but of course this does not preclude the existence of schemes based on other ideas.

Furthermore, assuming that an indexing scheme consists of a family of $1$-Lipschitz functions whose values are recursively computed by the algorithm  does not necessarily imply that the role of the functions is reduced to certifying that a certain point is not in the $\ve$-ball around the query point. As an example, consider the indexing scheme \cite{clarkson:94} based on a walk on the Delaunay graph of $X$ in $\Omega$ and called {\em spatial approximation} in \cite{navarro:2002}. For every datapoint $x\in X$, the scheme stores a list of datapoints whose Voronoi cells are adjacent to the cell containing $x$. At the search phase, a sequence of datapoints $x_1,x_2,\ldots,x_n$ is chosen, where each $x_{i+1}$ is the closest point to $\omega$ on the list of points Delaunay-adjacent to $x_i$. If choosing $x_{i+1}$ so as to get closer to $\omega$ is impossible, one backtracks. In practice, the scheme performs on par with the state of the art pivot or metric tree based schemes \cite{NR:2009}. We do not know whether our methods can be employed to prove the curse of dimensionality for this particular scheme in the same general setting.

It appears that attempting to extend the method to randomized, approximated NN search stands no chance either. Firstly, the dimensionality reduction-type methods often present in randomized algorithms for approximate search \cite{KOR,IM:98,AIP} mean that instead of $1$-Lipschitz functions, one is using what may be called ``probably approximately $1$-Lipschitz'' ones. For instance, a random projection from a high-dimensional Euclidean space to a subspace of smaller dimension, appropriately rescaled, will have the property that for most pairs of points $x,y$ the distance between them is approximately preserved, to within a factor of $1\pm\e$. This property in itself is a consequence of concentration of measure, but
such maps do not exhibit a strong concentration property, rendering our methods inapplicable. 

Chapter 4 in \cite{zezula:06} discusses algorithms for approximate similarity search based on a traditional metric tree, equipped with $1$-Lipschitz decision functions, but employing agressive pruning, either randomized or deterministic. 
Even here, our proof does not seem to be readily transferable. Indeed, it is based on the basic premise that {\em every bin meeting the $\ve$-neighbourhood of the query point needs to be examined in a deterministic fashion}. A randomized algorithm, on the contrary, avoids opening bins which are deemed unlikely to contain relevant datapoints. Experiments confirm that some of the algorithms in question perform up to 300 times faster than the corresponding algorithms for exact search using the same indexing structure ({\em loc.cit.}), and provide a circumstantial evidence that the situation here is indeed fundamentally different and possibly not amenable to the same methods of analysis.

% Thus, for the time being, we cannot see any possibility to generalize substantially the main result of the present article. Metric trees in their most abstract form appear to serve as the most natural framework for the method.

% Where does this leave us? In the opinion of the referee,
% \begin{quote}
% ``...the present lower bound proof might be too specific to the model (as opposed to the actual problem).''
% \end{quote}
% The first part of the statement, at least for the time being, appears to be well-founded. It is however less clear whether the curse of dimensionality conjecture for the Hamming cube is the only theoretical lens possible through which to regard the ``the actual problem'' of the curse of dimensionality.
% % in the cell probe model is any closer to than any other possible approach (e.g. \cite{KL:05}). 
% The model studied in the article covers a wide class of popular indexing schemes and obtains superpolynomial lower bounds on their performance within mathematically exacting standards of statistical learning. Under the assumption that the data is sampled from high-dimensional distribution. we obtain lower bounds on the expected average performance of metric trees which are superpolynomial in dimension. 

While the setting of artificially high-dimensional synthetic i.i.d. data fed to a scheme is not realistic, our results provide a theoretical validation to the known simulation results on the poor performance in medium to high dimensions of metric-tree type indexing schemes, such as SS tree \cite{WJ} and SR tree \cite{KS}, on such data inputs.

Some data practitioners believe that the intrinsic dimension of real-life datasets does not exceed as few as perhaps seven or ten dimensions. A deeper understanding of underlying geometry of workloads and its interplay with compleixty is called for in order to learn to detect and use this low dimensionality efficiently, and asymptotic analysis of algorithm performance in an artificial setting of very high dimensions is contributing towards this goal.

\end{document}